\documentclass[conference,10pt]{IEEEtran}
\IEEEoverridecommandlockouts
\usepackage{cite}
\usepackage{amsmath,amssymb,amsfonts,amsthm}
\usepackage{algorithmic}
\usepackage{graphicx}
\usepackage{mathrsfs}
\usepackage{textcomp}
\usepackage{xcolor}
\usepackage{romannum}
\usepackage{multirow}
\usepackage{algorithm,algorithmic}
\usepackage{subfigure}
\usepackage{cleveref}
\usepackage{outlines}
\usepackage{caption,setspace}
\usepackage{adjustbox}
\usepackage{amsmath}
\usepackage{bm}
\usepackage{comment}
\usepackage{bbm}
\theoremstyle{definition} 
\newtheorem{definition}{Definition}
\newtheorem{remark}{Remark}
\newtheorem{theorem}{Theorem}
\newtheorem{lemma}{Lemma}

\newtheorem{proposition}{Proposition}
\newtheorem{corollary}{Corollary}
\usepackage[font=footnotesize]{caption}
\usepackage{graphicx}
\usepackage{epstopdf}
\usepackage{graphicx}

\def\BibTeX{{\rm B\kern-.05em{\sc i\kern-.025em b}\kern-.08em
    T\kern-.1667em\lower.7ex\hbox{E}\kern-.125emX}}
\begin{document}

\title{Channel-Adaptive Edge AI: Maximizing Inference Throughput by Adapting Computational Complexity to Channel States \\

\thanks{
J. Zhang, J. Huang, and K. Huang are with the Department of Electrical and Computer Engineering, The University of Hong Kong, Hong Kong. Emails: jrzhang@eee.hku.hk, jianhaoh@hku.hk and huangkb@eee.hku.hk (Corresponding authors: J. Huang and K. Huang).}
}

\author{\IEEEauthorblockN{Jierui Zhang, \textit{Graduate Student Member, IEEE}, Jianhao Huang, \textit{Member, IEEE}, and Kaibin Huang, \textit{Fellow, IEEE}}}

\maketitle

\pagenumbering{arabic}

\begin{abstract}
\emph{Integrated communication and computation} (IC$^2$) has emerged as a new paradigm for enabling efficient edge inference in sixth-generation (6G) networks. However, the design of IC$^2$ technologies is hindered by the lack of a tractable theoretical framework for characterizing \emph{end-to-end} (E2E) inference performance. The metric is highly complicated as it needs to account for both channel distortion and artificial intelligence (AI) model architecture and computational complexity. In this work, we address this challenge by developing a tractable analytical model for E2E inference accuracy and leveraging it to design a \emph{channel-adaptive AI} algorithm that maximizes inference throughput, referred to as the edge processing rate (EPR), under latency and accuracy constraints. Specifically, we consider an edge inference system in which a server deploys a backbone model with early exit, which enables flexible computational complexity, to perform inference on data features transmitted by a mobile device. The proposed accuracy model characterizes high-dimensional feature distributions in the angular domain using a Mixture of von Mises (MvM) distribution. This leads to a desired closed-form expression for inference accuracy as a function of quantization bit-width and model traversal depth, which represent channel distortion and computational complexity, respectively. Building upon this accuracy model, we formulate and solve the EPR maximization problem under joint latency and accuracy constraints, leading to a channel-adaptive AI algorithm that achieves full IC$^2$ integration. The proposed algorithm jointly adapts transmit-side feature compression and receive-side model complexity according to channel conditions to maximize overall efficiency and inference throughput. Experimental results demonstrate its superior performance as compared with fixed-complexity counterparts in terms of EPR, highlighting the importance of jointly optimizing communication and computation resources.
\end{abstract}

\begin{IEEEkeywords}
Edge inference, channel-adaptive AI, adaptive transmission, mixture model.
\end{IEEEkeywords}

\section{Introduction}

The \emph{sixth-generation} (6G) mobile networks are expected to enable ubiquitous edge \emph{artificial intelligence} (AI) services (also known as edge inference), supporting applications such as autonomous driving, smart cities, robotics, and healthcare\cite{shi2020communication,liu2025integrated,mao2024green,zhou2019edge}. A typical edge inference system comprises a device that extracts features from sensed data and transmits them to an edge server; the server then performs inference using a pre-trained AI model\cite{shao2020communication,wen2023SCC}. To meet the stringent performance demands of 6G, researchers have adopted an \emph{integrated communication-and-computation} (IC$^2$) design approach aimed at optimizing \emph{end-to-end} (E2E) metrics such as latency, accuracy, and throughput\cite{zhu2020toward}. Since both communication and computation operations significantly influence E2E performance, they are inherently coupled and require joint control. The joint control is especially critical in high-mobility scenarios, where both C$^2$ operations must be dynamically adapted to rapidly changing channel conditions to support high-throughput, low-latency edge AI services. However, due to the lack of tractable design methods, existing techniques predominantly enable only partial C$^2$ adaptation, namely either adaptive communication or adaptive computation but not both. To bridge this gap, this paper proposes a tractable IC$^2$ framework, termed \emph{channel-adaptive AI}, which jointly adjusts computational complexity and transmission compression based on \emph{channel state information} (CSI), with the goal of maximizing E2E inference throughput while satisfying constraints on latency and inference accuracy.

In classical communication theory, channel adaptation, namely adjusting the modulation and coding scheme based on the channel state, is a key mechanism for realizing the optimal rate-reliability tradeoff in the presence of fading \cite{goldsmith1997variable,goldsmith1998adaptive}. While traditional techniques are application agnostic, recent 6G research has introduced channel-adaptive transmission schemes tailored specifically to edge inference systems \cite{shi2020communication}. Such schemes aim to ensure reliable and efficient delivery of data features through mechanisms such as scheduling, bandwidth allocation, and power control \cite{yao2025energy,wang2024joint,qu2025partialloading}. To further reduce communication overhead, a device can partition a feature vector into multiple segments and transmit them progressively in order of decreasing importance until a target inference confidence is achieved \cite{lan2022progressive}. In distributed sensing-and-inference systems, the scalability of multi-access is addressed by over-the-air computation (AirComp), which exploits waveform superposition to perform in-air feature aggregation and thereby enables simultaneous access \cite{amiri2020machine,yang2020federated}. Realizing AirComp typically relies on distributed transmit-power adaptation to mitigate computation errors caused by channel distortion\cite{cao2020optimized}. Overall, these prior approaches primarily focus on communication adaptation and do not provide the full C$^2$ adaptation required to optimize E2E performance.

Recent research has begun to explore fully adaptive IC$^2$ designs. A representative class of techniques is split inference, where a global AI model is partitioned into two parts deployed at the device and the server for cooperative inference \cite{shao2020communication,yan2022optimal}. The traversal depth of the model on the device side affects the number of features to be transmitted and thus the communication overhead. It also determines the device’s computation energy consumption, while the communication energy counterpart depends on the channel condition \cite{zhou2025communication}. Combining these facts motivates the development of methods to adapt the model-splitting point to the channel state, jointly with bandwidth allocation or device selection\cite{yan2022optimal,lee2023wireless,chen2024adaptive,shao2020bottlenet++,li2024optimal}. Another main line of work focuses on deploying so-called early-exiting models in edge inference systems \cite{han2021dynamic,liu2023resource,li2019edge,wang2020dual,teerapittayanon2016branchynet,dong2022resource}. Such a model is a neural network architecture that provides intermediate outputs (``early exits'') at multiple depths. By enabling multi-stage predictions, the model offers a fine-grained mechanism for controlling accuracy, latency, and computation load, which is particularly useful for resource-constrained or real-time applications \cite{liu2023resource,bolukbasi2017adaptive}. The inherent tradeoff of an early-exiting model is that intermediate features become increasingly discriminative as the traversal depth increases, leading to higher inference accuracy but also greater computational cost. Deploying a large-scale backbone model at the server facilitates meeting heterogeneous quality-of-service and E2E latency requirements across users \cite{liu2023resource,li2019edge,zhou2025communication}. For example, in a multiuser system with heterogeneous requirements, the C$^2$ latency of different users can be jointly controlled via bandwidth allocation and exit-point selection to maximize inference throughput \cite{liu2023resource}. Owing to this flexibility, an early-exiting model is adopted in this work to realize the proposed adaptive AI framework.

\begin{figure}[t]
\centering
{\includegraphics[width=0.6\linewidth]{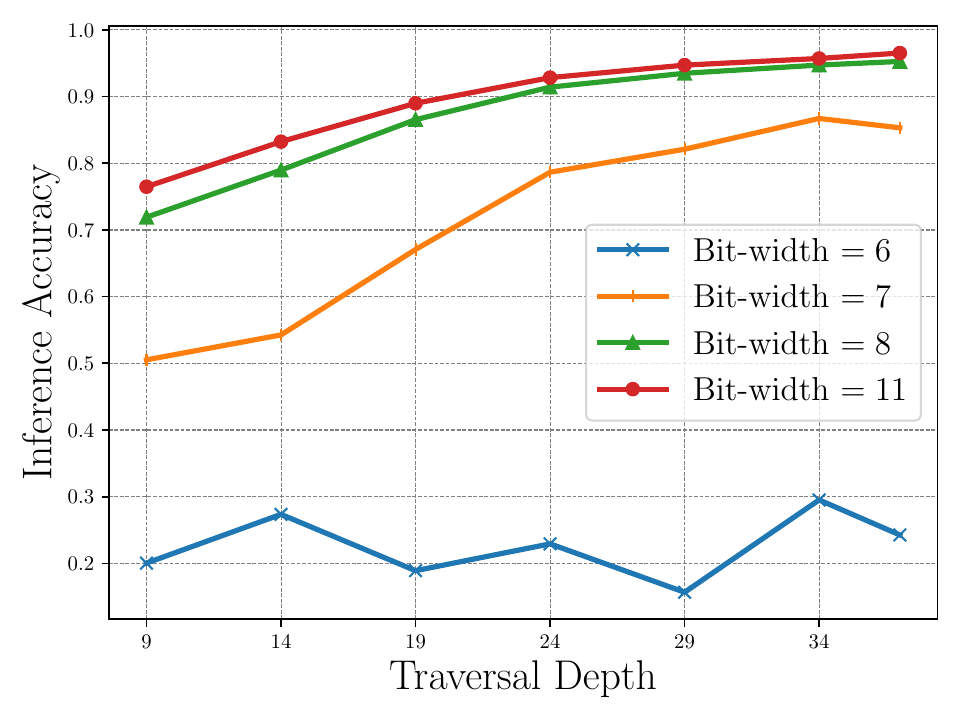}}
\caption{Dependence of inference accuracy on traversal depth and quantization bit-width. Four settings of bit-width are utilized. The model used is ResNet-152 with intermediate classifiers, and on the dataset CIFAR-10.}
\vspace{-5mm}
\captionsetup{justification=justified}
\label{motivation acc vs l}
  \end{figure}
The main challenge of designing the proposed adaptive AI techniques lies in the lack of a theoretical framework that quantifies the C$^2$ relationships among inference accuracy, computational complexity, and channel distortion. 
Such a framework would enable the derivation of a simple, closed-form control scheme to facilitate rapid channel adaptation, supporting the stringent requirements of low-latency applications. 
However, this critical framework is currently absent from the literature, as explained below. 
Prior work on edge inference typically assumes reliable communication without explicit latency guarantees (see, e.g., \cite{shao2020communication,yan2022optimal,liu2023resource}). In contrast, low-latency applications must satisfy strict latency constraints. On one hand, under unfavorable channel conditions, transmitted data must either be compressed more aggressively (incurring larger quantization errors), or suffer increased transmission errors. On the other hand, increasing the traversal depth of on-device models can improve inference robustness to such data distortions (see Fig.~\ref{motivation acc vs l}), but at the cost of higher computational complexity/latency \cite{teerapittayanon2016branchynet}. Therefore, developing efficient channel-adaptive AI techniques requires the aforementioned theoretical framework to properly account for these trade-offs. To address this issue, existing research on early-exiting models characterizes the impact of traversal depth on accuracy through empirically measured accuracy–latency pairs\cite{liu2023resource}. This enables the joint control of bandwidth allocation and exit-point selection to be formulated as a combinatorial optimization problem, which is then solved via efficient tree search algorithms. However, this method is inefficient in highly dynamic environments, as every change in channel state or user set triggers the execution of a computationally intensive tree search. Moreover, such empirical approaches are limited in more complex scenarios, such as the one considered here where multiple interdependent factors, including accuracy, complexity, and channel distortion, must be simultaneously addressed.

To address the preceding challenge, the main contribution of this work lies in developing a novel, tractable model of inference accuracy for early-exiting models and applying it to derive closed-form schemes for jointly adapting the traversal depth and quantization bit-width to the channel state. Thereby, the resulting adaptive-AI framework achieves maximum E2E throughput under constraints on air latency and inference accuracy. Focusing on a popular classification task, we consider a single-user edge inference system described shortly. In this system, a shallow model running on the edge device performs feature extraction, while a large-scale early-exiting model is deployed at the server to provide a flexible trade-off between discernibility and computational complexity/latency. Prior to transmission, the extracted features are quantized at the device with an adjustable bit-width to control communication overhead. To evaluate the E2E throughput of the system, we define the \emph{edge processing rate} (EPR) as the ratio of the number of transmitted and processed bits to the E2E inference latency, subject to a target inference accuracy and air latency. The key contributions and findings of this paper are summarized as follows.
\begin{itemize}
    \item \textbf{Tractable Modeling of Inference Accuracy}: The proposed statistical model establishes a closed-form relationship between E2E inference accuracy and the distribution of the projection of high-dimensional features onto a single-dimensional angular domain. Motivated by experimental observations, a key feature of the model is representing the feature distribution in the angular domain as a \emph{Mixture of von Mises} (MvM) distribution, with an adjustable linear-rate parameter fitted using real data. The model’s tractability is reflected in enabling the use of the maximum a posteriori (MAP) decision rule to derive closed-form expressions for inference accuracy at different traversal depths and under various levels of channel distortion. Validation against experimental results shows that the theoretical predictions closely match observed outcomes. Both theoretical and empirical results confirm the trend that inference accuracy improves with increased traversal depth at the server but degrades as quantization errors grow due to deteriorating channel conditions. This accuracy model provides the essential analytical foundation for the adaptive AI strategies developed in the subsequent work.
    \item \textbf{Channel-adaptive AI Schemes}: Building on the preceding accuracy model, we formulate an optimization problem aimed at maximizing the EPR under constraints on air latency and E2E inference accuracy. By solving this problem, a closed-form adaptive AI scheme is derived that jointly selects the quantization bit-width on the device and the traversal depth at the server based on the instantaneous channel \emph{signal-to-noise ratio} (SNR). The solution approach involves a continuous relaxation (CR) of the discrete variables, namely, traversal depth and quantization bit-width, and subsequent rounding to obtain a practical implementation. As shown by the derived scheme, it is optimal to control the bit-width linearly proportionally to the channel capacity, and select the minimal traversal depth that satisfies the accuracy constraint given the corresponding quantization distortion.
    \item \textbf{Experiments}:
    Using the CIFAR-10 dataset, we compare the proposed adaptive AI scheme with a non-adaptive baseline that employs fixed quantization bit-widths and traversal depths. The results demonstrate consistent performance improvements of the adaptive scheme across all SNR levels. Notably, at a transmit SNR of 25 dB, our scheme achieves twice the EPR of the non-adaptive approach at a target accuracy of 95\%. Additionally, by relaxing the target accuracy from 95\% to 85\%, the EPR of the proposed scheme can be further increased by 14.2\%. This highlights its flexibility in effectively balancing inference accuracy and communication efficiency.
\end{itemize}

The remainder of this paper is organized as follows. Section \ref{system model} introduces the models and metrics.
The tractable model for inference and the channel-adaptive AI are presented in Section \ref{Theoretical Model} and Section \ref{CA2I}, respectively. Experimental results are provided in Section \ref{experiments}, followed by concluding remarks in Section \ref{concluding remarks}.

\section{Models and Metrics}\label{system model}

Consider an edge inference system as shown in Fig.~\ref{system}, in which the edge device extracts the features of source data and transmits them to the edge server for inference. The inference result will subsequently be sent back to the edge device. The inference model, communication model, and performance metrics are introduced in the following subsections.

\begin{figure*}[t]
\centering
{\includegraphics[width=0.75\linewidth]{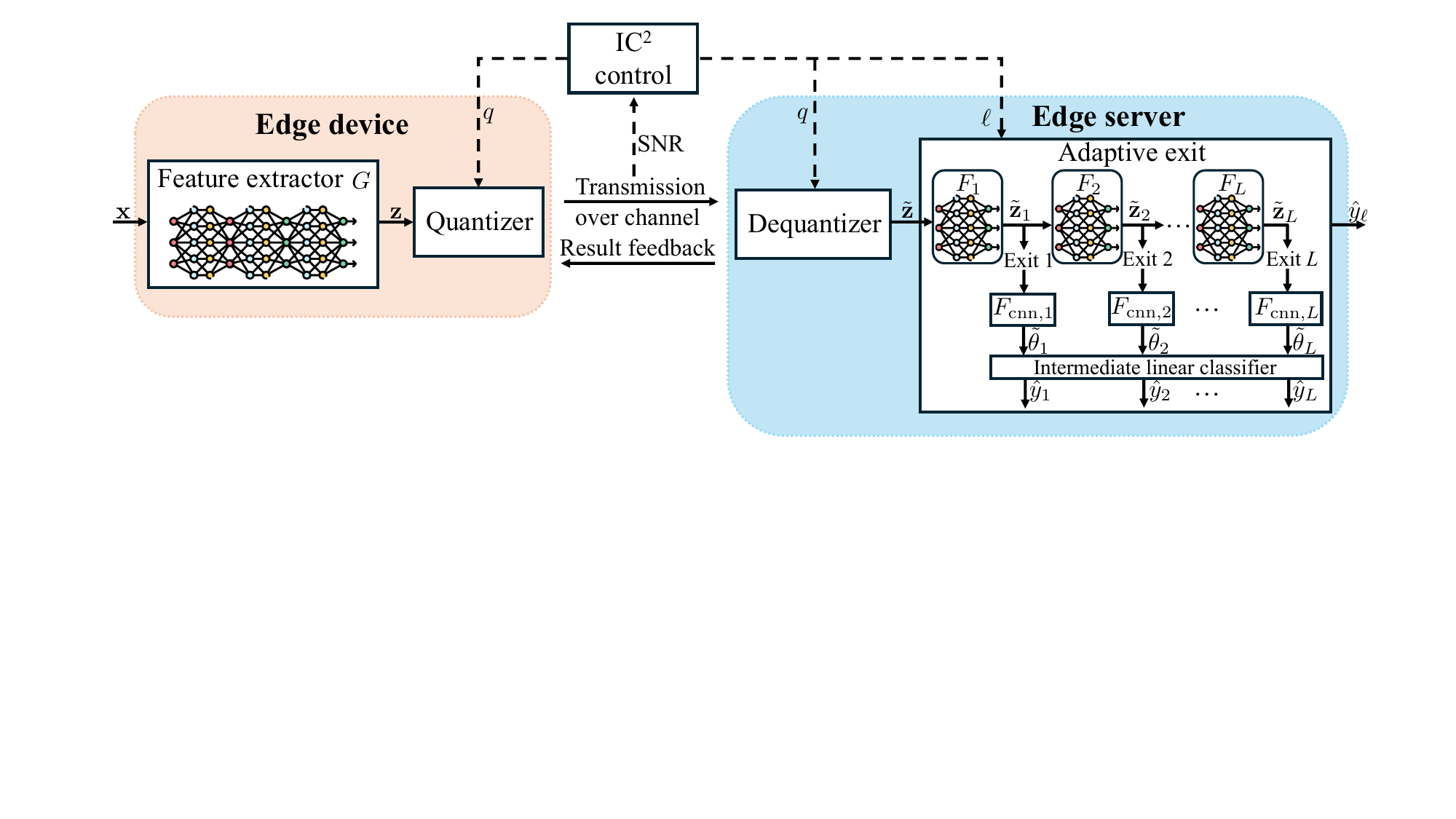}}
\caption{The system model of the channel-adaptive edge inference. The bit-width $q$ and traversal depth $\ell$ are controlled w.r.t. SNR.}
\vspace{-5mm}
\captionsetup{justification=justified}
\label{system}
  \end{figure*}

\subsection{Inference Model}\label{Inference Model}

We consider a classification problem that aims to predict the label $\hat{y}\in\mathcal{J}=\{1,2,...,J\}$ ($J\geqslant 2$) of an image sample $\mathbf{x} \in \mathbb{R}^{D}$.
A pretrained backbone model $\Phi=\{G,F\}$ is split into two parts: a shallow model $G$ is on the edge device, and a deep model $F$ with $L$ layers is on the edge server, with the $\ell$-th layer in $F$ denoted by $F_\ell$. Modern backbone models typically contain a sequence of blocks with identical architecture (e.g., a set of residual blocks within the same stage of the ResNet\cite{he2016deep}). $G$ serves as a feature extractor to transform the sample $\mathbf{x}$ to latent features of $\mathbf{z}=G(\mathbf{x}) \in \mathbb{R}^{d}$. The features are then transmitted over a channel and received at the edge server, as $\tilde{\mathbf{z}}$. At the edge server, traversing the model $F$, the received features $\tilde{\mathbf{z}}$ are gradually transformed to become more discriminative (see e.g., \cite{tishby2015deep}), thereby resulting in increasingly higher inference accuracy.

\subsubsection{Adaptive Exit}

Controllable inference latency can be realized by early exiting from the server model $F$ \cite{liu2023resource,bolukbasi2017adaptive,laskaridis2021adaptive}.
The specific $\ell$ for an inference task is determined by the channel state and target accuracy to realize the concept of \emph{adaptive exit}. After traversing the first $\ell$ layers, $\tilde{\mathbf{z}}$ become $\tilde{\mathbf{z}}_\ell=(F_\ell \circ \cdots \circ F_1)(\tilde{\mathbf{z}})$.

\subsubsection{Intermediate Classifiers}

Each early-exit point of the backbone model is attached to an intermediate classifier, denoted by $F_{\mathrm{cnn},\ell}$, for exiting at $F_\ell$. Each intermediate classifier is composed of three convolutional layers followed by fully connected layers. Profiling shows that one intermediate classifier introduces approximately \(4.6\%\) additional \emph{floating point operations} (FLOPs) and \(6.6\%\) additional parameter storage relative to the backbone. During each inference task, only the intermediate classifier at the selected exit point is executed. Note that the conventional design outputs the predicted label via a Softmax classifier. In contrast, for the proposed design, $F_{\mathrm{cnn},\ell}(\cdot)=(\mathrm{atan}2\circ \phi_\ell)(\cdot)$ first transforms $\tilde{\mathbf{z}}_\ell$ into a 2D feature vector via the nonlinear projection $\phi_\ell$, and then outputs the associated angular feature via the $\mathrm{atan}2$ function\footnote{$\mathrm{atan}2(y,x)\triangleq \mathrm{arg}(x+iy)$ is the argument of the complex number $x+iy$ \cite{ahlfors1979complex}.}: $\tilde{\theta}_\ell=F_{\mathrm{cnn},\ell}(\tilde{\mathbf{z}}_\ell)\in(-\pi,\pi]$.
The angular features from different classes exhibit distinct distributions in the angular space. Denote the conditional probability density function (PDF) of $\tilde{\theta}_\ell$ given class $j\in\mathcal{J}$ as $p_\ell(\tilde{\theta}_\ell|j)$, which is modeled in the next section. The \emph{maximum a posteriori} (MAP) criterion is employed to estimate the class (i.e., label) of $\tilde{\theta}_\ell$ as $\hat{y}_\ell=\arg\max_{j\in\mathcal{J}}\Pr(j|\tilde{\theta}_\ell)$, which, under the assumption of uniform priors, simplifies to
\begin{equation}
\hat{y}_\ell=\arg\max_{j\in\mathcal{J}}p_\ell(\tilde{\theta}_\ell|j). \label{classifier}
\end{equation}

\subsubsection{Backbone and Classifier Training}

First, the backbone model $F$ is trained using the training set $\{(\mathbf{x}_i,y_i)\}_{i=1}^{N}$ with $\mathbf{x}_i$ being the $i$-th sample, $y_i\in\mathcal{J}$ being its ground truth label, and $N$ being the sample number, and the \emph{cross entropy} as the loss function \cite{rumelhart1986learning}. 
As for the proposed intermediate classifier $F_{\mathrm{cnn},\ell}$, the training procedure is as follows. The training samples $\{\mathbf{x}_i\}_{i=1}^{N}$ are transformed into high-dimensional intermediate features $\{\tilde{\mathbf{z}}_{\ell,i}\}_{i=1}^{N}$ using $G$, followed by $F_\ell \circ \cdots \circ F_1$.
To facilitate classification and for ease of modeling, we resort to \emph{representation learning} to train $F_{\mathrm{cnn},\ell}$ for mapping $\{\tilde{\mathbf{z}}_{\ell,i}\}_{i=1}^{i=N}$ to angles $\{\tilde{\theta}_{\ell,i}\}_{i=1}^{i=N}$.
Specifically, for each class $j\in\mathcal{J}$, we assign a predefined class centroid $\mu_j\in(-\pi,\pi]$.
In the angular space, the \emph{angular distance} is measured as 
\begin{equation} d(\beta_1,\beta_2)=\min\big(|\beta_1-\beta_2|,2\pi-|\beta_1-\beta_2|\big).\label{angular distance}
\end{equation}
Then, we define the loss function over the training set $\{(\tilde{\mathbf{z}}_{\ell,i},\mu_{y_i})\}_{i=1}^{N}$ as
\begin{equation}
    \mathscr{L} \stackrel{\triangle}= 1 - \frac{1}{N} \sum_{i=1}^{N} \cos\Big(d(\tilde{\theta}_{\ell,i},\mu_{y_i})\Big).
\end{equation}
Then, minimizing this loss function by training encourages angular samples to cluster around their centroids corresponding to their ground-truth labels. We demonstrate that employing this loss does not degrade learning performance compared to standard cross entropy; rather, it enhances training robustness, thereby leading to improved overall results, as shown in Fig.~\ref{softmax vs angular}.

\begin{figure}[t]
\centering
{\includegraphics[width=0.6\linewidth]{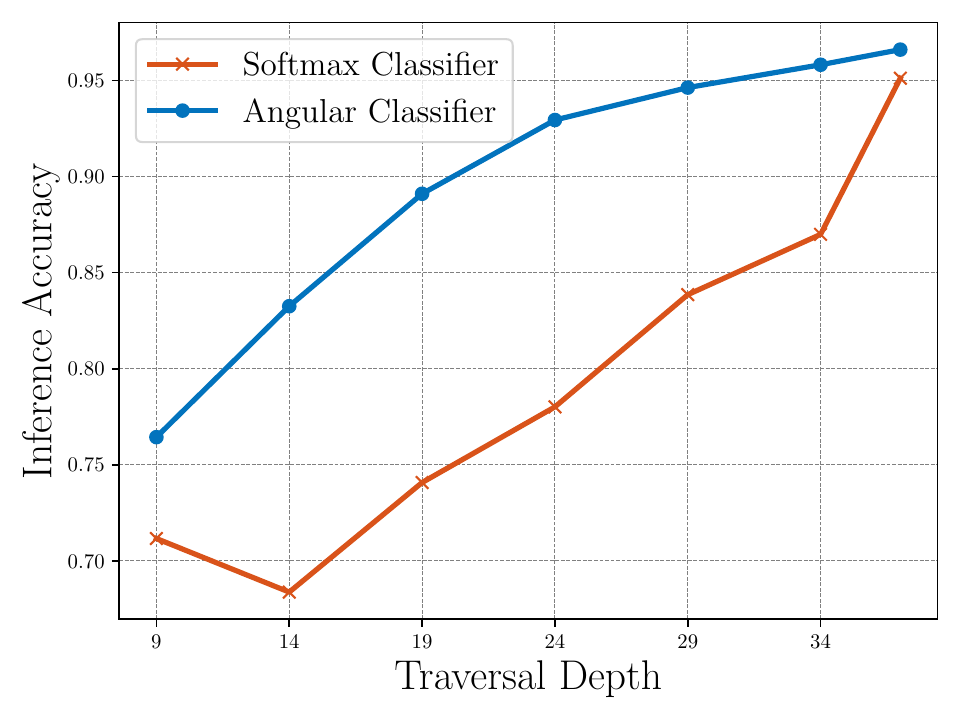}}
\caption{Comparison between the standard Softmax classifier (trained with cross entropy) and the proposed angular classifier. Both classifiers are trained for 50 epochs.}
\vspace{-5mm}
\captionsetup{justification=justified}
\label{softmax vs angular}
  \end{figure}

\subsection{Communication Model}

Consider the transmission of the feature vector, $\mathbf{z}$, by the device. Each element of $\mathbf{z}$, say $z$, is quantized into $q$ bits by uniform quantization\cite{gersho1977quantization}. Specifically, let $[c_{\mathrm{min}},c_{\mathrm{max}}]$ be the quantization range.
Let the quantization step be $\Delta=\frac{c_{\mathrm{max}}-c_{\mathrm{min}}}{2^q}$.
Then the quantized $z$ is given as
$\tilde{z}=c_{\mathrm{min}}+\Delta\cdot\big( \lfloor \frac{z-c_{\mathrm{min}}}{\Delta}\rfloor+\frac{1}{2}\big)$,
where $\lfloor\cdot\rfloor$ is the floor function.
The quantization level $\lfloor \frac{z-c_{\mathrm{min}}}{\Delta}\rfloor$ is mapped to a binary code of $q$ bits. The binary codes of all features are concatenated to form a bit stream, which is then modulated and transmitted.
The $i$-th symbol received at the server is given as
\begin{equation} y_{i}=h\sqrt{p} x_{i}+n_{i},\label{x_transmission_digital}
\end{equation}
where
$x_{i}$ is the $i$-th transmitted symbol, $p$ is the transmit power, $h$ is the channel coefficient, and $n_{i}\sim \mathcal{CN}(0,\nu_n^2)$ is the \emph{independent and identically distributed} (IID) \emph{additive white Gaussian noise} (AWGN).
It is assumed that $h$ remains constant over the duration of transmitting a single feature vector.
The communication rate is calculated using the Shannon equation as $r = B\log_2(1+\gamma)$, where $B$ is the bandwidth and $\gamma = \frac{p|h|^2}{\nu_n^2}$ is the receive SNR \cite{shannon1948mathematical}.
Then, the communication latency for transmitting $\mathbf{z}$, denoted as $T_{\mathrm{comm}}$, is
\begin{equation}
    T_{\mathrm{comm}}=\frac{dq}{B\log_2{(1+\gamma})}.\label{commu latency}
\end{equation}
To support a low-latency application, a latency constraint is enforced:
    $T_{\mathrm{comm}}\leq T_{\mathrm{max}},\label{commu constraint}$
where $T_{\mathrm{max}}$ is the maximal air latency.
The edge server demodulates the received symbols to recover the bit stream and reconstruct the feature vector as $\tilde{\mathbf{z}}$. The overhead of downloading the inference result (i.e., scalar label) back to the device is negligible.

\subsection{Performance Metrics}

The system performance can be evaluated using two metrics: inference accuracy and throughput (i.e., EPR). The inference accuracy to be modeled in the next section is denoted as $P(q,\ell)$, which is affected by the bit-width $q$ and traversal depth $\ell$ at the edge server. Next, the EPR is defined as follows.
The E2E inference latency is the summation of communication latency and computation latency:
$T_{\mathrm{total}}=T_{\mathrm{comm}}+T_{\mathrm{comp}}$,
where $T_{\mathrm{comp}}$ combines inference time at both the device and server. Mathematically,
\begin{equation}
T_{\mathrm{comp}}=\frac{\Lambda}{\nu_d}+\frac{\sum_{k=1}^{\ell}\lambda(k)}{\nu_s},\label{comp latency flops}
\end{equation}
where $\lambda(k)$ is the FLOPs in layer $k$ on the server while $\Lambda$ is the FLOPs of the device model, and $\nu_d$ and $\nu_s$ are the computation speed of the device and server, respectively.
Since the layers in the backbone model have the same structure, we can write $T_{\mathrm{comp}}$ in \eqref{comp latency flops} as an affine function w.r.t. $\ell$, as \cite{yao2025energy}
\begin{equation}
T_{\mathrm{comp}}=b_1\ell+b_2,\label{comp latency}
\end{equation}
where $b_1$ and $b_2$ are constants.

\begin{definition}
    [Edge Processing Rate (EPR)]
    Given the bit-width $q$, the traversal depth $\ell$, and the inference accuracy $P(q,\ell)$, the EPR is defined as the number of bits that the edge inference system can process per unit time\footnote{The switching overhead of adapting \((q,\ell)\) is omitted because changing \(q\) only updates the quantization/bit-packing configuration, while changing \(\ell\) only selects a pre-attached exit head. The computation of \(q\) and \(\ell\) is also lightweight, as discussed in Section~IV. Moreover, in the considered quasi-static setting, the selected \((q,\ell)\) can be reused while the channel estimate remains valid.}:
\begin{equation}
\mathrm{EPR}\stackrel{\triangle}=\frac{dq}{T_{\mathrm{comm}}+T_{\mathrm{comp}}}. \label{EPE definition}
    \end{equation}
By substituting \eqref{commu latency} and \eqref{comp latency} into \eqref{EPE definition},
\begin{equation}
       \mathrm{EPR}(q,\ell)=\frac{dq}{\frac{dq}{B\log_2{(1+\gamma})}+b_1\ell+b_2}. \label{EPE ql definition}
    \end{equation}
\end{definition}

\begin{remark}
    [Comparison with conventional metrics]\mbox{}
    \begin{itemize}
    \item \textit{EPR versus communication rate:} Compared with the classic communication rate \cite{goldsmith1997variable}, EPR is constrained by both the computation and communication latency. EPR is upper bounded by the communication rate, i.e., $B\log_2{(1+\gamma})$. When the computation speed approaches infinity, the EPR approaches the communication rate. 
     \item \textit{EPR versus conventional inference throughput:} EPR generalizes conventional inference throughput measured in inferences per second by accounting for the number of transmitted feature bits. For fixed \(d\) and \(q\), EPR is equivalent to a constant-scaled version of inferences per second; when \(q\) varies, EPR better reflects the communication-computation tradeoff.
    \item \textit{EPR versus inference accuracy:} According to Fig.~\ref{motivation acc vs l} and \eqref{EPE ql definition}, it can be observed that an increased traversal depth, $\ell$, enhances the inference accuracy but lowers the EPR. When $\ell=1$, the accuracy is at its minimum while EPR is at its maximum. The reverse is true when $\ell$ is large.
\end{itemize}
\end{remark}
Overall, the design of EPR reflects the IC\(^2\) principle by jointly characterizing the transmitted feature-bit volume and the E2E inference latency.
To balance the preceding tradeoff, it is necessary to determine the optimal bit-width and traversal depth, which is the topic of Section \ref{CA2I}.

\section{Tractable Modeling of Inference Accuracy}\label{Theoretical Model}

In this section, we establish a tractable model of inference accuracy to facilitate the design of channel-adaptive AI. First, the mixture of von Mises model is proposed to represent the distribution of angular features. Then, we use the model to characterize inference accuracy in relation to traversal depth and the bit-width.

\begin{figure}[t]
\centering
        \subfigure[2D features.]{\includegraphics[width=0.48\linewidth]{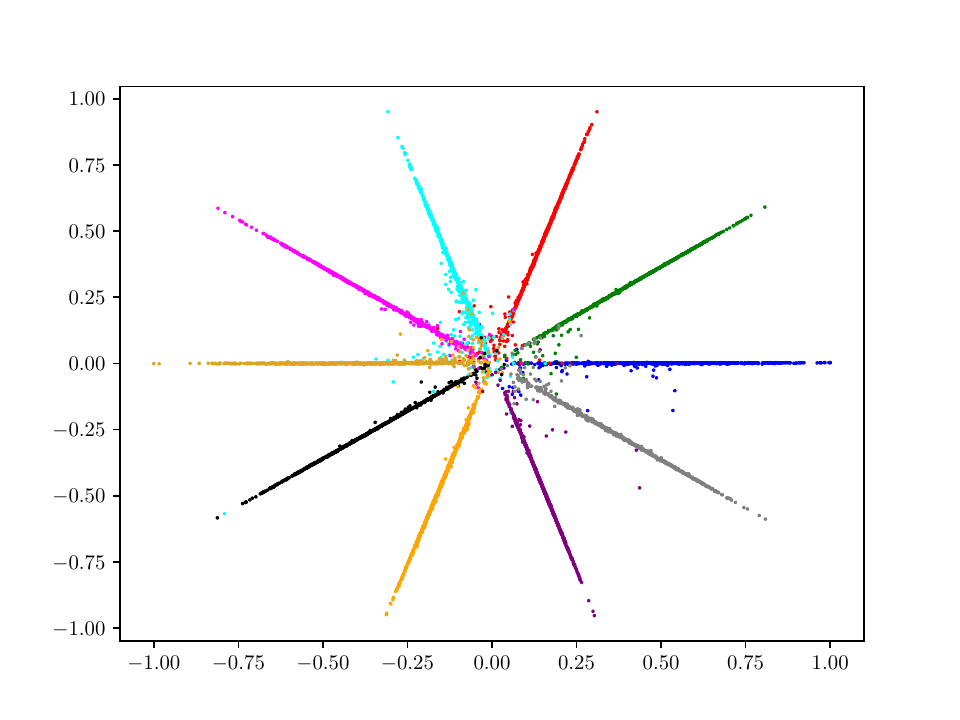}\label{2D features}}
    \subfigure[Angular features.]{\includegraphics[width=0.48\linewidth]{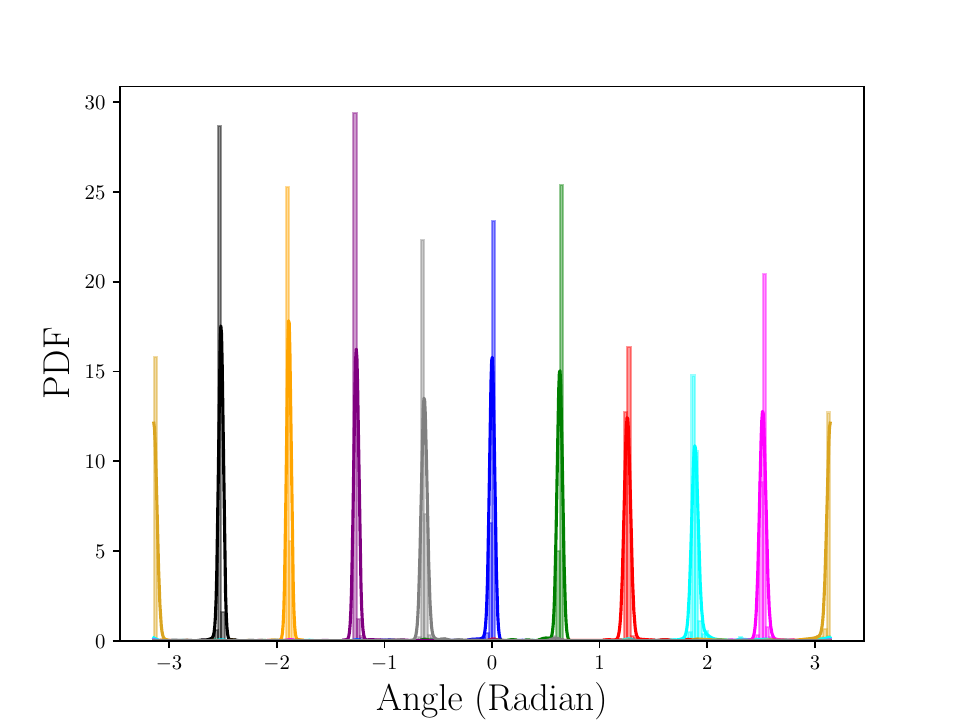}\label{1D feature dist.}}
    \caption{The distributions of the 2D and angular features on CIFAR-10. Each colored peak corresponds to one of the 10 CIFAR-10 classes, clustering around its predefined centroid. In the angular domain, the class centroids are evenly distributed and the concentration levels across different classes are similar.}
    \vspace{-5mm}
    \label{vMF dist.}
  \end{figure}

\subsection{Modeling the Distribution of Angular Features}

Consider the AI model described in Section \ref{Inference Model} and the angular features output at an arbitrary exit point. In this subsection, we describe a statistical model for their distribution, which is useful for subsequent classification accuracy analysis. First, we adopt a representation learning method from the existing literature to train the model such that the angular features form clusters in the angular domain, as illustrated in Fig.~\ref{vMF dist.} and \cite{wang2018cosface,wen2016discriminative}.\footnote{The trained backbone model with the proposed intermediate classifiers is utilized to transform the samples in the validation set. The effects of quantization are omitted. The traversal depth is set to $\ell=37$, but the distribution pattern is also valid for other values. The CIFAR-10 dataset and ResNet-152 are utilized in the demonstrations in this section, with the details described in the experimental settings in Sec.~\ref{Experimental Setting}.} One modification we make on the method is to introduce a set of given centroids into the learning process to improve classification accuracy. For simplicity, we assume uniform class priors and choose the centroids to be equally spaced in angle; for other class-prior distributions, the centroid locations can be adjusted accordingly.
Based on the above learning method, it has been shown in the literature that the angular features of a single class $j$ are approximately distributed in the angular domain (see Fig.~\ref{1D feature dist.}), according to a von Mises (vM) distribution defined by the following PDF \cite{scott2021mises,hasnat2017mises}
\begin{equation}
p_\ell(\tilde{\theta}_\ell|j)=\frac{\exp{\big(\kappa^{(j)}_{\Delta,\ell} \cos{(\tilde{\theta}_\ell-\mu
    _j)}\big)}}{2\pi I_0(\kappa^{(j)}_{\Delta,\ell})}, \ \tilde{\theta}_\ell\in(-\pi,\pi],\label{VM distribution}
\end{equation}
where $I_0(s)=\frac{1}{\pi}\int_{0}^{{\pi}}\exp{(s \cos{(x)})}\,dx$ is the modified Bessel function of the first kind and order $0$, $\mu_j$ is the centroid of the $j$-th class, and $\kappa^{(j)}_{\Delta,\ell}$ is a concentration parameter related to quantization noise and traversal depth. Note that $\kappa^{(j)}_{\Delta,\ell}$ is inversely related to dispersion, so $1/\kappa^{(j)}_{\Delta,\ell}$ plays a role analogous to variance. We denote the distribution as $\tilde{\theta}_\ell|j\sim v\mathcal{M}(\mu_j,\kappa^{(j)}_{\Delta,\ell})$. Accordingly, the multi-class angular features follow a \emph{mixture of von Mises} (MvM) distribution \cite{zhe2019directional}:
\begin{equation}
p_\ell(\tilde{\theta}_\ell)=\frac{1}{J}\sum_{j=1}^J p_\ell(\tilde{\theta}_\ell|j).\label{MvM}
\end{equation}
Most classes in practical datasets typically exhibit a similar level of concentration, as demonstrated in Fig.~\ref{1D feature dist.}. Consequently, a common simplifying assumption of uniform $\kappa_{\Delta,\ell}$, i.e., $\kappa^{(j)}_{\Delta,\ell}=\kappa_{\Delta,\ell}, \forall j$, is adopted in the literature and also made in the model in \eqref{MvM} \cite{zeng2024knowledge,hastie2009elements}. The parameter $\kappa_{\Delta,\ell}$ in (11) is the most important parameter of the MvM model. It can be estimated from a dataset using the procedure described in Appendix A. 

\subsection{Model Parameter w.r.t. Traversal Depth and Channel Distortion}

Consider the key parameter, $\kappa_{\Delta,\ell}$, of the MvM model in \eqref{MvM}.
As demonstrated in Fig.~\ref{kappa w.r.t. l and q}, it is related to traversal depth and quantization bit-width, which determine the levels of computational complexity and channel distortion, respectively.
In this subsection, we investigate these relationships. 

\begin{figure}[h]
\centering
    \subfigure[$\kappa_{\Delta,\ell}$ w.r.t. traversal depth.]    {\includegraphics[width=0.48\linewidth]{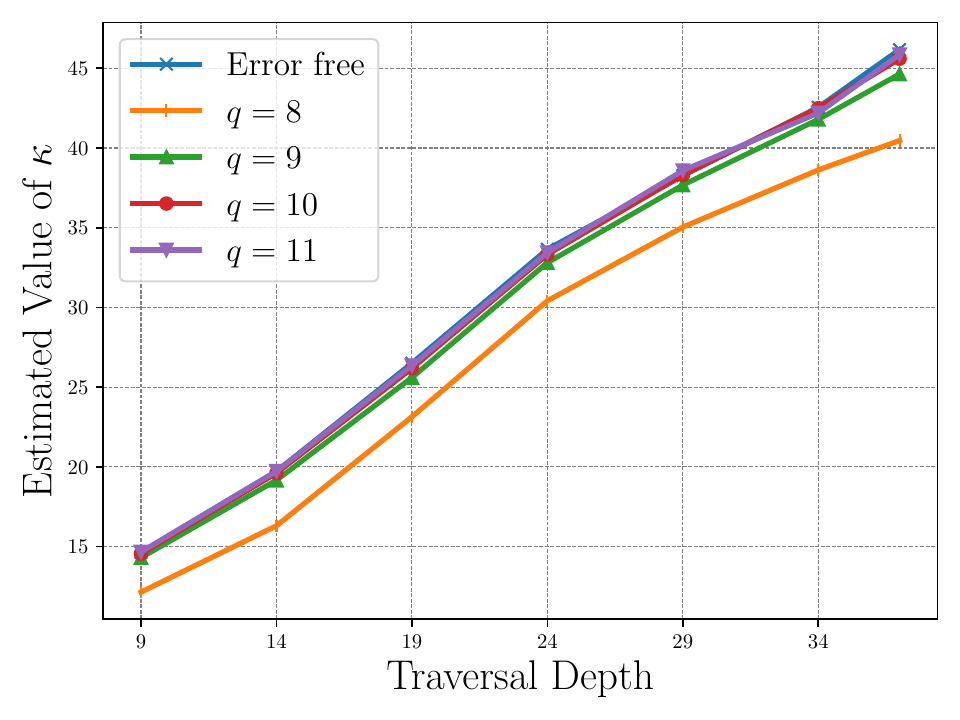}\label{kappa vs l in Q}}
    \subfigure[$\kappa_{\Delta,\ell}$ w.r.t. bit-width.]{\includegraphics[width=0.48\linewidth]{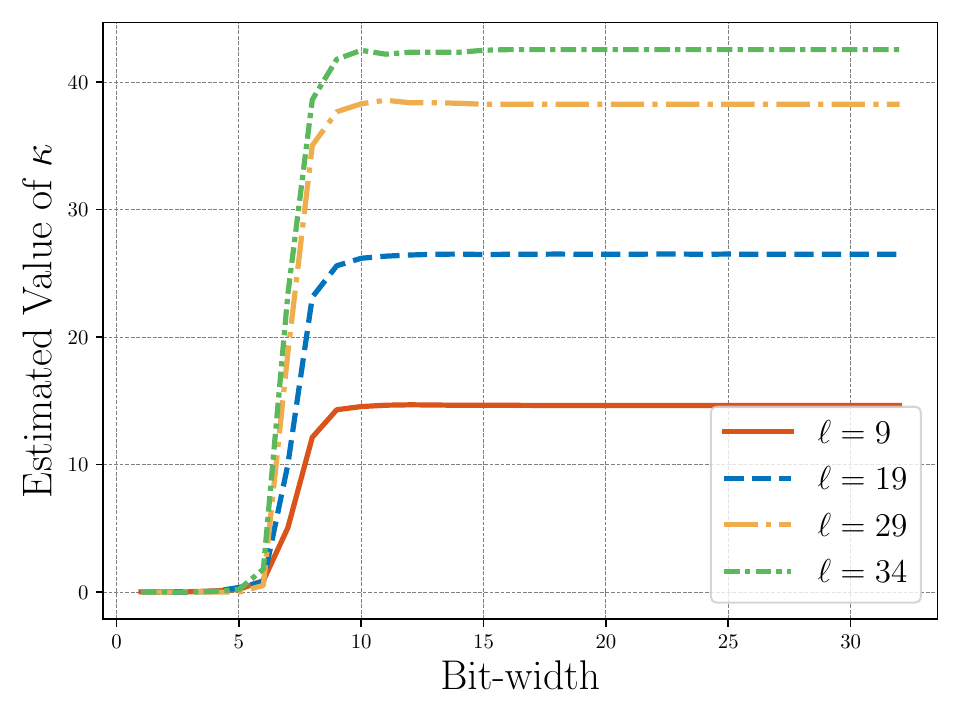}\label{kappa vs Q in l}}
\caption{$\kappa_{\Delta,\ell}$ w.r.t. traversal depth and bit-width.}\label{kappa w.r.t. l and q}
  \end{figure}

\subsubsection{Modeling Distortion w.r.t. Bit-width}

In the considered edge inference system, the distortion on the feature vector mainly comes from quantization errors, which is significant under the latency constraint. We model this distortion element-wise using a uniform distribution: $\epsilon \sim \mathcal{U}\left(-\Delta/2,\Delta/2\right)$ \cite{gray1998quantization}.\footnote{The uniform quantization-error model is adopted for simplifying exposition and analytical tractability. For non-uniform quantization, the resulting bit-width--distortion relationship in \eqref{error variance} can be replaced by an empirically measured one, while the subsequent channel-adaptive optimization framework remains applicable.}
The variance is thus given as
\begin{equation}
\sigma_\Delta^2(q)=\frac{(c_{\mathrm{max}}-c_{\mathrm{min}})^2}{12\cdot 2^{2q}}.\label{error variance}
\end{equation}
How the distortion propagates through the neural network is quantified in the sequel.

\subsubsection{$\kappa_{\Delta,\ell}$ w.r.t. Traversal Depth}

First, assume quantization errors are negligible, i.e., $\sigma_\Delta^2=0$. Let $\bar{\kappa}_\ell$ denote the concentration parameter in this scenario. Based on the empirical results shown in Fig.~\ref{kappa vs l in Q}, the relationship between $\bar{\kappa}_\ell$ and the traversal depth, $\ell$, can be modeled as the following linear function
\begin{equation}
    \bar \kappa_{\ell}=c_1 \ell + c_2,\label{k_l}
\end{equation}
where the coefficients $c_1$ and $c_2$ can be estimated by the \emph{least square method} \cite{montgomery2021introduction}.

\subsubsection{$\kappa_{\Delta,\ell}$ w.r.t. Traversal Depth and Channel Distortion}

Next, based on the preceding models, the desired relations are quantified in the following proposition.

\begin{proposition}
    \label{prop distorted distribution}
    Assume a data sample to transmit belongs to $j$, given the traversal depth, $\ell$, at the server, and quantization distortion with variance $\sigma^2_\Delta$, the channel-distorted angular feature, $\tilde{\theta}_\ell|j$, follows a vM distribution with centroid $\mu_j$ and concentration parameter $\kappa_{\Delta,\ell}$, i.e., $v\mathcal{M}(\mu_j,\kappa_{\Delta,\ell})$. $\mu_j$ is fixed from the training process and remains unchanged, and $\kappa_{\Delta,\ell}$ is given as
\begin{equation}
\kappa_{\Delta,\ell}=A^{-1}\big(A(\bar\kappa_\ell)A(\rho_{\Delta,\ell})\big),\label{distorted kappa calculation}
\end{equation}
where $A(\cdot)\stackrel{\triangle}=\frac{I_1(\cdot)}{I_0(\cdot)}$, $A^{-1}(\cdot)$ is its inversion, $I_1(s)=\frac{1}{\pi}\int_{0}^{\pi}\cos{x}\exp{(s \cos{x})}\, dx$ is the modified Bessel function of the first kind and order $1$, and  $\rho_{\Delta,\ell} = A^{-1}\Bigl(\exp\bigl(-\sigma_\Delta^2 a_\ell/2\bigr)\Bigr)$. Moreover, $a_\ell \triangleq \mathbb{E}_{\mathbf{z}} \left[ \mathbf{a}_\ell(\mathbf{z})^\top \mathbf{a}_\ell(\mathbf{z}) \right]$ represents the expected
squared norm of the gradient 
$\mathbf{a}_\ell \triangleq \nabla f_\ell(\mathbf{z})$, 
where the model propagation function is
$f_\ell(\cdot) \triangleq (F_{\mathrm{cnn},\ell} \circ F_\ell \circ \cdots \circ F_1)(\cdot)$. According to experimental results (see Fig.~\ref{al vs l}), $a_\ell$ decreases exponentially w.r.t. $\ell$, therefore we approximate $a_\ell$ by
\begin{equation}
    a_\ell=c_3\exp(-c_4\ell),\label{al}
\end{equation}
where $c_3$ and $c_4$ are estimated positive constants. 
\end{proposition}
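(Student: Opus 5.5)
The argument decomposes the distorted angle into the clean angle plus an independent angular perturbation, and then composes two von Mises laws by matching their first trigonometric moment. Write the received feature as $\tilde{\mathbf{z}}=\mathbf{z}+\boldsymbol{\epsilon}$, where $\boldsymbol{\epsilon}$ has IID entries $\epsilon\sim\mathcal{U}(-\Delta/2,\Delta/2)$ with variance $\sigma_\Delta^2$ from \eqref{error variance}. Expanding the propagation function to first order gives
\begin{equation*}
\tilde{\theta}_\ell=f_\ell(\mathbf{z}+\boldsymbol{\epsilon})\approx f_\ell(\mathbf{z})+\mathbf{a}_\ell(\mathbf{z})^\top\boldsymbol{\epsilon}
\end{equation*}
(mod $2\pi$). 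The clean angle $\bar\theta_\ell=f_\ell(\mathbf{z})$ given class $j$ is $v\mathcal{M}(\mu_j,\bar\kappa_\ell)$ by the MvM model with $\sigma_\Delta^2=0$ and \eqref{k_l}. The perturbation $\eta=\mathbf{a}_\ell^\top\boldsymbol{\epsilon}$ is a sum of $d$ independent zero-mean terms. For large $d$ a CLT argument makes $\eta$ approximately Gaussian with variance $\sigma_\Delta^2\,\mathbf{a}_\ell(\mathbf{z})^\top\mathbf{a}_\ell(\mathbf{z})$. I will replace this by its expectation over $\mathbf{z}$, giving $\eta\sim\mathcal{N}(0,\sigma_\Delta^2 a_\ell)$, approximately independent of $\bar\theta_\ell$ because $\boldsymbol{\epsilon}$ is independent of $\mathbf{z}$.

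Next, wrap $\eta$ onto the circle and approximate it by a vM law. A wrapped normal with variance $s^2$ has mean resultant length $\mathbb{E}[\cos\eta]=\exp(-s^2/2)$. A $v\mathcal{M}(0,\rho)$ has mean resultant length $A(\rho)=I_1(\rho)/I_0(\rho)$, which follows by differentiating $\log I_0$ or directly from the integral definitions. Matching these first trigonometric moments gives the vM approximation of the wrapped normal (the standard Stephens-type approximation) with
\begin{equation*}
\rho_{\Delta,\ell}=A^{-1}\bigl(\exp(-\sigma_\Delta^2 a_\ell/2)\bigr).
\end{equation*}
This is well-defined because $A$ is strictly increasing from $[0,\infty)$ onto $[0,1)$.

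Then convolve on the circle. For independent angles, the characteristic function factorizes: $\mathbb{E}[e^{i(\bar\theta_\ell+\eta)}]=\mathbb{E}[e^{i\bar\theta_\ell}]\,\mathbb{E}[e^{i\eta}]=e^{i\mu_j}A(\bar\kappa_\ell)A(\rho_{\Delta,\ell})$. Since $\eta$ is symmetric, the mean direction remains $\mu_j$ and the resultant length is the product of the two. The convolution of two vM laws is not exactly vM. I will approximate it by the vM law with the same mean direction and resultant length, which yields $\kappa_{\Delta,\ell}=A^{-1}(A(\bar\kappa_\ell)A(\rho_{\Delta,\ell}))$, exactly \eqref{distorted kappa calculation}.

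The exponential form \eqref{al} of $a_\ell$ is an empirical fit, introduced and justified by Fig.~\ref{al vs l}, and needs no proof. The main obstacle is making the linearization and Gaussianity of $\eta$ rigorous. This requires small $\Delta$ relative to the curvature of $f_\ell$, a wrap-around issue near the $\pm\pi$ cut that is handled by working with $e^{i\theta}$, and replacing $\mathbf{a}_\ell^\top\mathbf{a}_\ell$ by its mean. I will state these as approximations valid for fine quantization and high dimension $d$, consistent with the paper's modeling approach, rather than claim exactness.
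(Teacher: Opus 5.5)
Your proposal is correct and follows the paper's own route. Both arguments use first-order Taylor linearization of $f_\ell$, a CLT step giving $\mathbf{a}_\ell^\top\boldsymbol{\epsilon}\approx\mathcal{N}(0,\sigma_\Delta^2 a_\ell)$, a vM approximation of the wrapped Gaussian, and the vM convolution approximation $A^{-1}(A(\bar\kappa_\ell)A(\rho_{\Delta,\ell}))$. The differences are that you justify the two circular approximations by first-trigonometric-moment matching and make the averaging of $\mathbf{a}_\ell(\mathbf{z})^\top\mathbf{a}_\ell(\mathbf{z})$ over $\mathbf{z}$ explicit, whereas the paper simply cites both approximations from Mardia and Jupp.
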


\begin{proof}
    See Appendix~\ref{Proof of P1}.
\end{proof}

\begin{remark}
    [Effects of channel distortion]
The result in \eqref{distorted kappa calculation} indicates that the presence of noise always makes the $\kappa_{\Delta,\ell}$ smaller and correspondingly reduces the discernibility of data features. The reason is as follows. First of all, the function $A(\cdot)$ is monotonically increasing, so is its reverse, $A^{-1}(\cdot)$. As a result, $\rho_{\Delta,\ell}$ defined in the proposition statement is a monotone decreasing function of the error variance $\sigma_\Delta^2$. It follows that $\kappa_{\Delta,\ell}$ decreases as $\sigma_\Delta^2$ grows, as claimed earlier, resulting in lower classification accuracy. For a sanity check, as $\sigma_\Delta^2 \to 0$, we have $\rho_{\Delta,\ell} \to \infty$ and hence $A^{-1}\big(A(\bar\kappa_\ell)A(\rho_{\Delta,\ell})\big) \to A^{-1}\big(A(\bar\kappa_\ell)\cdot1\big) \to \bar\kappa_\ell$, yielding $\kappa_{\Delta,\ell}\to\bar \kappa_{\ell}$ as in the ideal case of no quantization error.
\end{remark}

\subsection{Inference Accuracy Analysis}

Building on the preceding model of angular feature distribution, the performance of the inference system is analyzed with the result stated in the following theorem.

\begin{theorem}
Consider the MvM model of angular feature distribution in \eqref{MvM} with the parameter $\kappa_{\Delta,\ell}$ given in \eqref{distorted kappa calculation} and the MAP detection method in \eqref{classifier}. The inference accuracy for channel distortion $\sigma_\Delta^2$ and traversal depth $\ell$ is given as
\begin{equation}
    P(\sigma_\Delta^2,\ell)=
    \int_{0}^{\frac{\pi}{J}}\frac{\exp{\big(\kappa_{\Delta,\ell} \cos{(x)}\big)}}{\pi I_0(\kappa_{\Delta,\ell})}\, dx,\label{accuracy}
\end{equation}
where the model parameter $\kappa_{\Delta,\ell}$ is simplified as
\[\kappa_{\Delta,\ell}=A^{-1}\big(A(c_1 \ell + c_2)\exp(-\sigma_\Delta^2 c_3\exp(-c_4\ell)/2)\big).\]
\end{theorem}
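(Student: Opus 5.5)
We first compute the accuracy under the MAP rule \eqref{classifier}, and then substitute the parametric forms for $\kappa_{\Delta,\ell}$.

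\textbf{Step 1: decision regions.} By Proposition~\ref{prop distorted distribution}, each class-conditional density is $v\mathcal{M}(\mu_j,\kappa_{\Delta,\ell})$ with a common concentration $\kappa_{\Delta,\ell}>0$. Hence the only $j$-dependent factor in $p_\ell(\tilde\theta_\ell|j)$ is $\exp(\kappa_{\Delta,\ell}\cos(\tilde\theta_\ell-\mu_j))$. Since $\cos(\cdot)$ is even, $2\pi$-periodic and strictly decreasing on $[0,\pi]$, we have $\cos(\tilde\theta_\ell-\mu_j)=\cos d(\tilde\theta_\ell,\mu_j)$ with $d$ from \eqref{angular distance}. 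The MAP rule therefore reduces to a minimum-angular-distance rule: $\hat y_\ell=\arg\min_j d(\tilde\theta_\ell,\mu_j)$.

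The centroids are equally spaced, with spacing $2\pi/J$. The decision region of class $j$ is thus the arc $\{\theta: d(\theta,\mu_j)<\pi/J\}$, which has half-width $\pi/J$; boundary points have measure zero and can be ignored.

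\textbf{Step 2: per-class accuracy.} The correct-decision probability for class $j$ is
\[
\int_{\mu_j-\pi/J}^{\mu_j+\pi/J}\frac{\exp(\kappa_{\Delta,\ell}\cos(\theta-\mu_j))}{2\pi I_0(\kappa_{\Delta,\ell})}\,d\theta .
\]
Substituting $x=\theta-\mu_j$ (modulo $2\pi$) and using the evenness of the integrand folds the symmetric interval onto $[0,\pi/J]$. The prefactor $1/(2\pi)$ becomes $1/\pi$, which gives \eqref{accuracy}. This expression does not depend on $j$. With uniform priors, the total accuracy $\frac1J\sum_j$ of these equal terms equals the same integral.

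\textbf{Step 3: closed form of $\kappa_{\Delta,\ell}$.} Substitute \eqref{k_l} for $\bar\kappa_\ell$ and \eqref{al} for $a_\ell$ into \eqref{distorted kappa calculation}. The key simplification is $A(\rho_{\Delta,\ell})=A(A^{-1}(\exp(-\sigma_\Delta^2a_\ell/2)))=\exp(-\sigma_\Delta^2 a_\ell/2)$. This identity is legitimate because $A$ is a strictly increasing bijection from $[0,\infty)$ onto $[0,1)$, and $\exp(-\sigma_\Delta^2a_\ell/2)\in(0,1]$. The endpoint value $1$, which occurs when $\sigma_\Delta^2=0$, is handled as a limit, as in the remark above. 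The same bijectivity shows that the argument of the outer $A^{-1}$ lies in $[0,1)$, so $\kappa_{\Delta,\ell}$ is well defined.

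\textbf{Main obstacle.} The delicate point is Step 1: rigorously justifying that the MAP regions are exactly the arcs of half-width $\pi/J$. This requires equal $\kappa$ across classes, which is assumed, together with the wrap-around at $\pm\pi$. I would handle the wrap-around by working with $d(\cdot,\cdot)$ and the periodicity of cosine, rather than with raw differences $\theta-\mu_j$.
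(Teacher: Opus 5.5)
Your proposal is correct and follows the paper's own proof: the MAP decision regions are arcs of half-width $\pi/J$ around equally spaced centroids, the per-class integral folds onto $[0,\pi/J]$ by evenness and does not depend on the class, and substituting the linear model for $\bar\kappa_\ell$ and the exponential model for $a_\ell$ into the expression for $\kappa_{\Delta,\ell}$ gives the simplified form. You also justify two points the paper only asserts: you derive the nearest-centroid reduction from the monotonicity of cosine on $[0,\pi]$ with a common $\kappa_{\Delta,\ell}>0$ (the paper cites symmetry), and you check that $A(A^{-1}(\cdot))$ and the outer $A^{-1}$ are well defined.
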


\begin{proof}
    Since all clusters are equidistant and possess identical shapes, the decision boundaries are located at the midpoints between neighboring cluster centers. The classification accuracy for a specific class is obtained by integrating its PDF over its corresponding decision region; by symmetry, this accuracy is identical across all classes. Taking class $j$ as an example, the classification accuracy is given by
\[\mathrm{Pr}\left(d(\mu_{j},\tilde{\theta}_\ell) < \frac{\pi}{J}\right) = \int_{-\frac{\pi}{J}}^{\frac{\pi}{J}}\frac{\exp{(\kappa_{\Delta,\ell} \cos x)}}{2\pi I_0(\kappa_{\Delta,\ell})}\, dx,\]
which equals \(\int_{0}^{\frac{\pi}{J}}\frac{\exp{(\kappa_{\Delta,\ell} \cos x)}}{\pi I_0(\kappa_{\Delta,\ell})}\, dx,\) due to the symmetry of the integrand. Assume equal class priors, then this expression represents the overall system accuracy, yielding \eqref{accuracy}. Finally, substituting \eqref{k_l}, \eqref{distorted kappa calculation}, and \eqref{al} into \eqref{accuracy} yields the desired result.
\end{proof}

The behavior of \eqref{accuracy} is depicted in Fig.~\ref{theory acc}, while its properties are characterized as follows.

\begin{figure}[t]
\centering
{\includegraphics[width=0.6\linewidth]{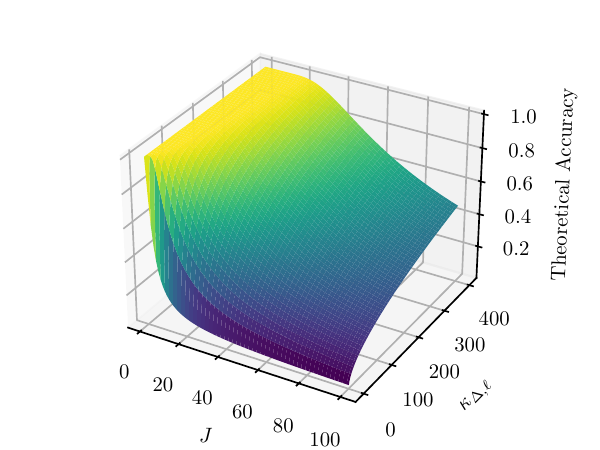}}
\caption{Inference accuracy function in \eqref{accuracy} w.r.t. the class number, $J$, and angular-feature distribution parameter, $\kappa_{\Delta,\ell}$.}
\vspace{-5mm}
\captionsetup{justification=justified}
\label{theory acc}
  \end{figure}

\subsubsection{Monotonicity}

The monotonicity of the accuracy function in \eqref{accuracy}, which is useful for the subsequent design, is stated as follows.

\begin{corollary}
    [Monotonicity of Accuracy Function]\label{monotonicity accuracy}
    The function $P(\sigma_\Delta^2,\ell)$ in \eqref{accuracy} is monotonically decreasing w.r.t. $J$ and $\sigma_\Delta^2$, and monotonically increasing w.r.t. $\kappa_{\Delta,\ell}$ and $\ell$.   
\end{corollary}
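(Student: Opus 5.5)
The plan is to treat $P$ as a composition and prove monotonicity of each link separately. Write $P = g(J,\kappa)$ with
\[g(J,\kappa)=\int_{0}^{\pi/J}\frac{\exp(\kappa\cos x)}{\pi I_0(\kappa)}\,dx,\]
and $\kappa=\kappa_{\Delta,\ell}$ given by the formula in the theorem.

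\textbf{Dependence on $J$.} The integrand is strictly positive. The upper limit $\pi/J$ decreases in $J$, so $g$ decreases in $J$. This step is immediate.

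\textbf{Dependence on $\kappa$.} Define the vM CDF-type quantity
\[h(\kappa,t)=\int_0^t e^{\kappa\cos x}\,dx\Big/\int_0^\pi e^{\kappa\cos x}\,dx,\qquad t=\pi/J\in(0,\pi).\]
I plan to differentiate in $\kappa$. Writing $E_t[\cdot]$ for the average under density $\propto e^{\kappa\cos x}$ on $[0,t]$, the log-derivative is
\[\frac{\partial_\kappa h}{h}=E_t[\cos x]-E_\pi[\cos x].\]
So it suffices to show $E_t[\cos x]>E_\pi[\cos x]$ for $t<\pi$. The intuition is that restricting to $[0,t]$ removes the region where $\cos x$ is smallest.

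To make this precise, note that $E_\pi[\cos x]$ is a mixture of $E_t[\cos x]$ and the conditional mean over $(t,\pi]$. Since $\cos$ is decreasing on $[0,\pi]$:
\begin{itemize}
\item every point of $(t,\pi]$ has $\cos x<\cos t$;
\item every point of $[0,t]$ has $\cos x\ge\cos t$.
\end{itemize}
Hence the conditional mean over $(t,\pi]$ is strictly less than $E_t[\cos x]$, and the mixture is strictly below $E_t[\cos x]$. This gives $\partial_\kappa g>0$. Here the integral $\int_0^\pi e^{\kappa\cos x}dx=\pi I_0(\kappa)$ has been used, so $g=h$. I expect this to be the main step requiring care, but the argument above is clean.

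\textbf{Dependence on $\sigma_\Delta^2$ and $\ell$ via $\kappa$.} I use that $A=I_1/I_0$ is strictly increasing on $[0,\infty)$ with range $[0,1)$; this is a standard fact, since $A'(\kappa)=\mathrm{Var}(\cos\theta)>0$ under vM. Consequently $A^{-1}$ is increasing. The argument of $A^{-1}$ is
\[A(c_1\ell+c_2)\exp\!\big(-\sigma_\Delta^2c_3e^{-c_4\ell}/2\big).\]
It lies in $[0,1)$, so the inverse is well defined. It is decreasing in $\sigma_\Delta^2$ because $c_3>0$. Therefore $\kappa$ decreases in $\sigma_\Delta^2$, and so does $P$.

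For $\ell$, I will assume $c_1>0$ and $c_1\ell+c_2\ge0$, which is the empirical increasing trend in \eqref{k_l}; I will state this assumption explicitly. Then:
\begin{itemize}
\item $A(c_1\ell+c_2)$ increases in $\ell$;
\item $-\sigma_\Delta^2c_3e^{-c_4\ell}/2$ increases in $\ell$ because $c_4>0$, so the exponential factor increases too.
\end{itemize}
Both factors are positive and increasing, so their product increases. Hence $\kappa$ increases in $\ell$, and by the chain rule $P$ increases in $\ell$.

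If the analysis is done in terms of $q$ rather than $\sigma_\Delta^2$, \eqref{error variance} shows that $\sigma_\Delta^2$ is decreasing in $q$, so $P$ is increasing in $q$.
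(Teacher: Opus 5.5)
Your proof is correct. The $J$ step and the $\sigma_\Delta^2$/$\ell$ step match the paper's: positivity of the integrand for $J$, and for $\sigma_\Delta^2$ and $\ell$ the positivity and monotonicity of $A(c_1\ell+c_2)$ and $\exp(-\sigma_\Delta^2c_3e^{-c_4\ell}/2)$ followed by monotonicity of $A^{-1}$. You also state explicitly the assumptions $c_1>0$, $c_1\ell+c_2\ge 0$, and you justify $A'>0$ through $A'(\kappa)=\mathrm{Var}(\cos\theta)$; the paper uses both tacitly.

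The genuine difference is in the $\kappa$ step.

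\textbf{Paper's route.} The paper fixes $0<\kappa_1<\kappa_2$ and sets $g(x)=\ln f(x,\kappa_2)-\ln f(x,\kappa_1)$. Applying the mean value theorem to $\ln I_0$ and using $A<1$, it gets $g(0)>0>g(\pi)$. Hence the two densities cross exactly once, at some $x_0$. It then splits into two cases. If $x_0\ge\pi/J$, it compares the integrals directly on $[0,\pi/J]$. If $x_0<\pi/J$, it passes to the complement $[\pi/J,\pi]$, using that both densities integrate to one.

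\textbf{Your route.} You differentiate, obtaining $\partial_\kappa \ln P=E_t[\cos x]-E_\pi[\cos x]$. You then conclude from the mixture decomposition of $E_\pi$ and the monotonicity of $\cos$ on $[0,\pi]$.

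\textbf{Comparison.} Both arguments rest on the same fact: $\cos x$ is decreasing on $[0,\pi]$, so the family $e^{\kappa\cos x}$ has a monotone likelihood ratio. Your version needs neither the mean value theorem nor the case split, and it also covers $\kappa=0$. It gives an explicit, strictly positive expression for the sensitivity $\partial_\kappa P$. The paper's finite-difference version avoids differentiating under the integral sign (harmless here anyway). It is the classical ``single crossing of densities implies ordering of distribution functions'' argument.
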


\begin{proof}
    See Appendix~\ref{proof monotonicity accuracy}.
\end{proof}

The above properties are aligned with intuition. First, as the number of object classes, $J$, increases, their discernibility drops and thus the accuracy decreases.\footnote{For large-class scenarios, the 2D angular embedding may become less effective because the angular margin decreases as the number of classes grows. A natural extension is to use a higher-dimensional hyperspherical embedding and model the features using a mixture of von Mises-Fisher distributions~\cite{mardia2009directional}. The corresponding accuracy generally requires integration over spherical decision regions, and developing such high-dimensional models is left for future work.} Second, increasing the distribution parameter, $\kappa_{\Delta,\ell}$, causes the features belonging to the same class to be more concentrated, which enhances the classification accuracy. These results suggest that one viable approach to improve the classification performance is to either increase $\kappa_{\Delta,\ell}$ or decrease $J$.
Third, Corollary~\ref{monotonicity accuracy} states that traversing more model layers at the receiver yields higher inference accuracy, while increased channel distortion (i.e., larger $\sigma_\Delta^2$) entails the opposite. This suggests higher computational complexity can help to mitigate the adverse effect of channel distortion, which is the fundamental principle underpinning the design in the next section.

\subsubsection{Asymptotics} 

To understand the limits of the inference accuracy, we derive the following results.

\begin{corollary}\label{range of accuracy}
    The range of the inference accuracy is $P\in(\frac{1}{J}, 1)$. When $\kappa_{\Delta,\ell}\to0$, $P\to\frac{1}{J}$, and when $\kappa_{\Delta,\ell}\to\infty$, $P\to1$.
\end{corollary}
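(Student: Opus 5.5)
We will treat $P$ as a function of $\kappa=\kappa_{\Delta,\ell}\in(0,\infty)$ only, writing
\[
P(\kappa)=\frac{\int_{0}^{\pi/J}e^{\kappa\cos x}\,dx}{\int_{0}^{\pi}e^{\kappa\cos x}\,dx},
\]
which uses $\pi I_0(\kappa)=\int_0^\pi e^{\kappa\cos x}\,dx$. Corollary~\ref{monotonicity accuracy} shows that $P$ is increasing in $\kappa$. Hence the range of $P$ over $\kappa\in(0,\infty)$ is the open interval whose endpoints are $\lim_{\kappa\to0^+}P$ and $\lim_{\kappa\to\infty}P$. Strict monotonicity guarantees that neither endpoint is attained, so the range is open. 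The proof therefore reduces to computing these two limits.

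\textbf{Limit as $\kappa\to0$.} The integrand $e^{\kappa\cos x}$ converges uniformly to $1$ on $[0,\pi]$. Passing to the limit in the numerator and denominator gives
\[
P\to\frac{\pi/J}{\pi}=\frac{1}{J}.
\]
The same conclusion follows from the uniform bounds $e^{-\kappa}\le e^{\kappa\cos x}\le e^{\kappa}$, which give $\frac{1}{J}e^{-2\kappa}\le P\le \frac{1}{J}e^{2\kappa}$.

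\textbf{Limit as $\kappa\to\infty$.} Here we show that $1-P\to0$. Write
\[
1-P=\frac{\int_{\pi/J}^{\pi}e^{\kappa\cos x}\,dx}{\int_{0}^{\pi}e^{\kappa\cos x}\,dx}.
\]
Set $c=\cos(\pi/J)<1$, which holds because $J\ge2$. The numerator is then at most $\pi e^{\kappa c}$.

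For the denominator we need a lower bound that beats this. Fix $\delta>0$ small enough that $\cos\delta>c$; this is possible since $\delta<\pi/J$. The denominator is at least $\delta e^{\kappa\cos\delta}$. Combining the two bounds,
\[
1-P\le\frac{\pi}{\delta}\,e^{-\kappa(\cos\delta-c)}\to0 .
\]
This Laplace-type concentration step is the only one requiring care, and choosing a fixed $\delta$ between $0$ and $\pi/J$ handles it cleanly.

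\textbf{Strictness.} We also note directly that $P<1$ for all finite $\kappa$, since the integrand is positive on $(\pi/J,\pi)$. Similarly, $P>1/J$ for all $\kappa>0$: the average of $e^{\kappa\cos x}$ over $[0,\pi/J]$ exceeds its average over $[0,\pi]$ because $\cos$ is decreasing on $[0,\pi]$. Both facts are consistent with the monotonicity in Corollary~\ref{monotonicity accuracy}, and together they establish $P\in(\frac1J,1)$.
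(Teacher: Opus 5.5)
Your proof is correct. It departs from the paper's mainly in the $\kappa\to\infty$ step.

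For $\kappa\to0$ the paper argues as you do, via continuity of $P$ in $\kappa$ and substitution of $\kappa=0$. For $\kappa\to\infty$, however, it replaces $\cos\theta$ by $1-\theta^2/2$ over the whole integration range. This gives $\frac{1}{\pi}\int_0^a e^{\kappa\cos x}\,dx\approx\frac{e^{\kappa}}{\sqrt{2\pi\kappa}}\,\mathrm{erf}\bigl(a\sqrt{\kappa/2}\bigr)$, and the paper reads off $P\to1$ from $\mathrm{erf}\bigl(\frac{\pi}{J}\sqrt{\kappa/2}\bigr)/\mathrm{erf}\bigl(\pi\sqrt{\kappa/2}\bigr)\to1$. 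That surrogate has no error control: the quadratic Taylor polynomial is used far from $x=0$, including at the boundary $x=\pi/J$ that governs the tail. So the paper's step is heuristic. What it buys is a closed form that the paper reuses for the scaling law in Proposition~\ref{asymptotic behavior}.

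Your fixed-$\delta$ bound $1-P\le(\pi/\delta)e^{-\kappa(\cos\delta-\cos(\pi/J))}$ is rigorous and non-asymptotic. Letting $\delta\downarrow0$ gives decay at any exponential rate below $1-\cos(\pi/J)$, which is the true rate, since the tail integral is dominated by its endpoint $x=\pi/J$. The Gaussian surrogate instead yields $\pi^2/(2J^2)$, which matches it only to leading order in $1/J$. You also verify $\frac{1}{J}<P<1$ directly, which the paper leaves implicit.

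One small gap: to assert that every value in $(\frac{1}{J},1)$ is attained, you need continuity of $P$ in $\kappa$, which is immediate for this parameter integral. Once you state it, your direct strictness checks and the two limits give the full range by the intermediate value theorem. The appeal to Corollary~\ref{monotonicity accuracy} is then not even required.
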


\begin{proof}
    See Appendix~\ref{proof of range of accuracy}.
\end{proof}

Similar to Corollary~\ref{monotonicity accuracy}, the results in Corollary~\ref{range of accuracy} agree with intuition. Note that the accuracy of $\frac{1}{J}$ corresponds to that of random guessing, indicating zero discernibility of data features.

To better understand the asymptotic properties of inference accuracy, its scaling law is derived as shown below.

\begin{proposition}
\label{asymptotic behavior}
    In the case free of distortion ($\sigma_\Delta^2=0$), as the traversal depth $\ell \to \infty$, the accuracy follows the scaling law:
    \begin{equation}
       P(0,\ell) \sim 1-\frac{\sqrt{2}\, J}{\pi^{3/2} \sqrt{c_1\ell+c_2}} \exp\left(-\frac{\pi^2}{2 J^2} (c_1\ell+c_2) \right). \label{scaling P w.r.t. l}
    \end{equation}
\end{proposition}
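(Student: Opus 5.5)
The plan is to study the complementary error probability $1-P(0,\ell)$ instead of $P(0,\ell)$ itself. The total mass of the von Mises density on $(-\pi,\pi]$ is one, and the integrand is even. Hence, with $\kappa \triangleq \bar\kappa_\ell = c_1\ell+c_2$ (which is what $\kappa_{\Delta,\ell}$ in \eqref{distorted kappa calculation} reduces to when $\sigma_\Delta^2=0$, by the Remark on channel distortion), \eqref{accuracy} gives
\begin{equation*}
1-P(0,\ell)=\frac{1}{\pi I_0(\kappa)}\int_{\pi/J}^{\pi} \exp\big(\kappa\cos x\big)\,dx .
\end{equation*}
Since $c_1>0$ (Fig.~\ref{kappa vs l in Q} and Corollary~\ref{monotonicity accuracy}), $\ell\to\infty$ is equivalent to $\kappa\to\infty$. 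The task therefore reduces to a large-$\kappa$ asymptotic of a ratio of two Laplace-type integrals.

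\textbf{Denominator.} I will use the classical expansion $I_0(\kappa)\sim e^{\kappa}/\sqrt{2\pi\kappa}$. It follows from Laplace's method applied to $I_0(\kappa)=\frac1\pi\int_0^\pi e^{\kappa\cos x}\,dx$ at the interior maximum $x=0$, using $\cos x\approx 1-x^2/2$.

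\textbf{Numerator.} On $[\pi/J,\pi]$ the phase $\cos x$ is strictly decreasing, so the maximum sits at the left endpoint $x_0=\pi/J$, where the derivative $-\sin(\pi/J)$ is nonzero. I will apply the endpoint version of Laplace's method. Substituting $x=x_0+t$ and linearizing, $\kappa\cos x\approx \kappa\cos x_0-\kappa t\sin x_0$, which gives
\begin{equation*}
\int_{\pi/J}^{\pi} e^{\kappa\cos x}\,dx\sim \frac{e^{\kappa\cos(\pi/J)}}{\kappa\sin(\pi/J)}.
\end{equation*}
The remainder is controlled by splitting the range at $t=\kappa^{-1/2}$. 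Near the endpoint the quadratic correction contributes only a factor $1+O(\kappa^{-1/2})$. Away from it, the integrand is exponentially smaller than $e^{\kappa\cos x_0}$.

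\textbf{Combining.} Dividing the two expansions yields
\begin{equation*}
1-P(0,\ell)\sim \frac{\sqrt{2}}{\sqrt{\pi}\,\sqrt{\kappa}\,\sin(\pi/J)}\exp\!\big(-\kappa(1-\cos(\pi/J))\big).
\end{equation*}
To reach the stated form \eqref{scaling P w.r.t. l}, I will replace the trigonometric constants by their leading Taylor terms, $\sin(\pi/J)\approx\pi/J$ and $1-\cos(\pi/J)\approx \pi^2/(2J^2)$. This turns the prefactor into $\sqrt2 J/(\pi^{3/2}\sqrt{\kappa})$ and the exponent into $-\pi^2\kappa/(2J^2)$. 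Substituting $\kappa=c_1\ell+c_2$ then gives the claimed scaling law.

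\textbf{Main obstacle.} This last substitution is where care is needed. In $\ell$ alone, the exact form with $\sin(\pi/J)$ and $1-\cos(\pi/J)$ is the true asymptotic equivalent. Replacing $1-\cos(\pi/J)$ inside the exponent is not asymptotically exact for fixed $J$, because the error is multiplied by $\kappa\to\infty$. I will therefore present the exact endpoint-Laplace result first. I will then state that \eqref{scaling P w.r.t. l} is its small-angle (moderately large $J$) form, which correctly captures the decay rate and its dependence on $J$ and $c_1\ell+c_2$.
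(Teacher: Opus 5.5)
Your derivation is correct, and it takes a different route from the paper. The paper reuses the approximation from the proof of Corollary~\ref{range of accuracy}. It replaces $\cos\theta$ by $1-\theta^2/2$ on the whole integration range, so that $P(0,\ell)\approx\mathrm{erf}\bigl(\frac{\pi}{J}\sqrt{\kappa/2}\bigr)/\mathrm{erf}\bigl(\pi\sqrt{\kappa/2}\bigr)$. It then drops the denominator and applies $1-\mathrm{erf}(z)\sim e^{-z^2}/(z\sqrt{\pi})$.

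That reaches the stated formula in three lines. However, it makes the same replacements $\sin(\pi/J)\to\pi/J$ and $1-\cos(\pi/J)\to\pi^2/(2J^2)$ that you make at the end, and it makes them before taking the limit. The quadratic approximation is asymptotically harmless for $I_0(\kappa)$, where Laplace's method localizes at $x=0$. It is not harmless for the tail $\int_{\pi/J}^{\pi}e^{\kappa\cos x}\,dx$, which localizes at $x=\pi/J$ and is governed by $\cos(\pi/J)$ rather than $1-\pi^2/(2J^2)$.

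Your endpoint-Laplace result quantifies the loss. The true $1-P(0,\ell)$ exceeds the stated term by the factor $\frac{\pi/J}{\sin(\pi/J)}\exp\bigl(\kappa\bigl[\frac{\pi^2}{2J^2}-1+\cos\frac{\pi}{J}\bigr]\bigr)$ with $\kappa=c_1\ell+c_2$. Since the bracket is positive (about $\pi^4/(24J^4)$), this factor diverges as $\ell\to\infty$ with $J$ fixed. So read literally for $P$ itself, the proposition is trivial, because both sides tend to $1$. Read as an equivalence for the error probability $1-P(0,\ell)$, it holds only in the small-angle sense, which is exactly how you propose to present it.

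The paper's route buys brevity and consistency with its erf-based asymptotics. Yours buys a genuine asymptotic statement. It also shows the discrepancy is numerically small in the paper's setting, where $J=10$ gives a bracket of about $4\times10^{-4}$, though it is large for small $J$.

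Your error control is fine as stated. The relative error near the endpoint is in fact $O(\kappa^{-1})$, since the mass sits at $t\sim 1/\kappa$, where the quadratic term is $O(\kappa^{-1})$.
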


\begin{proof}
    See Appendix~\ref{proof asymptotic behavior}.
\end{proof}

This result indicates that the classification error probability decays exponentially as $\ell$ grows. Therefore, increasing computational complexity is a highly effective way of enhancing inference performance.

\section{Channel-Adaptive AI Algorithm}\label{CA2I}

In the preceding section, we develop a theoretical model of inference accuracy. In this section, the model is leveraged to design the adaptive AI algorithm that maximizes throughput in terms of EPR under constraints on accuracy and latency. Finally, the algorithm performance is discussed.

\subsection{Problem Formulation}

The design of adaptive AI algorithm can be formulated as an optimization problem to maximize the EPR under constraints in inference accuracy and air latency. Mathematically,
\begin{subequations}
	\begin{align}
		\text{(P1)}~~\max_{q\in \mathcal{Q}, \ \ell\in\mathcal{L}} ~~ &  
		\mathrm{EPR}(q,\ell)\\ 
		\operatorname{ s.t. } ~~
		& P(\sigma_\Delta^2,\ell)\geq P_0,\label{acc constraint}\\
        & T_{\mathrm{comm}}\leq T_{\mathrm{max}},\label{comm constraint}
	\end{align}\label{P1}
\end{subequations}
\unskip
where $\mathrm{EPR}(q,\ell)$, $P(\sigma_\Delta^2,\ell)$, and $T_{\mathrm{comm}}$ are given by \eqref{EPE ql definition}, \eqref{accuracy}, and \eqref{commu latency}, respectively. Moreover, $P_0$ and $T_{\mathrm{max}}$ denote the target inference accuracy and maximal air
latency, respectively. Let $\mathcal{Q}\stackrel{\triangle}=\{0,1,2,\dots,Q\}$ be an available set of bit-widths, and $\mathcal{L}\stackrel{\triangle}=\{\ell_{1}, \ell_{2}, \dots,\ell_{i}\}\subseteq  \{1,2,\dots,L\}$ be the set of exit layers that supports adaptive exit with $\ell_{1}<\ell_{2}< \dots<\ell_{i}$. $\mathcal{L}$ is analogous to a modulation constellation.
The solution of (P1) is the channel-adaptive AI algorithm in the sequel, which determines the EPR-maximizing bit-width and traversal depth. 

\subsection{Channel-Adaptive AI Algorithm}

To solve (P1), we first relax the variables to be continuous and provide the corresponding solution in \ref{CQCL CA2I}. Subsequently, its relaxation is removed in \ref{CA2I restriction} to yield the desired algorithm.

\subsubsection{Continuous Relaxation}\label{CQCL CA2I}

For tractability, we first assume $q$ and $\ell$ to be continuous, and refer to the resultant algorithm as channel-adaptive AI with \emph{continuous relaxation} (CR). 
The associated EPR upper bounds its counterpart from solving (P1).
With CR, the continuous quantization can be realized as follows.
Suppose there are $d$ features to be quantized: if $\alpha d$ features are quantized into $q_0$ bits and $(1-\alpha) d$ features are quantized into ($q_0+1$) bits with $q_0$ being an integer, then the average bit-width is $q\stackrel{\triangle}=\alpha q_0+(1-\alpha)(q_0+1)=q_0+1-\alpha$.\footnote{Although $q$ is limited to rational values in this context, we still consider it to be continuous.} 
In this setting, the expected variance of quantization distortion in \eqref{error variance} becomes
\begin{equation}
\begin{split}
\sigma_\Delta^2(q)=&\alpha\frac{(c_{\mathrm{max}}-c_{\mathrm{min}})^2}{12\cdot 2^{2q_0}}+(1-\alpha)\frac{(c_{\mathrm{max}}-c_{\mathrm{min}})^2}{12\cdot 2^{2(q_0+1)}},\\
    &=\frac{1+3\alpha}{4}\frac{(c_{\mathrm{max}}-c_{\mathrm{min}})^2}{12\cdot 2^{2q_0}},\label{error variance cont.}
\end{split}
\end{equation}
where $q_0=\lfloor q \rfloor$ and $\alpha = 1-(q-q_0)$.
Note that when $\alpha=1$, \eqref{error variance cont.} reduces to \eqref{error variance}.
Under the CR, we solve (P1) as follows.
We first analyze the objective function of (P1):
\begin{equation}
    \mathrm{EPR}=\frac{dq}{T_{\mathrm{comm}}+T_{\mathrm{comp}}}
    =\frac{B\log_2{(1+\gamma})}{1+\frac{T_{\mathrm{comp}}}{T_{\mathrm{comm}}}}.\label{EPE detail}
\end{equation}
This suggests that maximizing $T_{\mathrm{comm}}$ and minimizing $T_{\mathrm{comp}}$ can maximize EPR. To achieve these two goals, we first formulate two intermediate optimization problems (P2) and (P3) as follows.

Maximizing $T_{\mathrm{comm}}$ fundamentally requires maximizing $q$. Therefore, (P2) is formulated as
\begin{subequations}
    \begin{align} 
\text{(P2)}~~\max_q& \quad q\\
\text{s.t.}& \quad \frac{dq}{B\log_2{(1+\gamma})}\leq T_{\mathrm{max}}.\label{latency constraint}
\end{align}
\end{subequations}
Next, we consider minimizing $T_{\mathrm{comp}}$, which fundamentally requires minimizing $\ell$ as $T_{\mathrm{comp}}$ increases monotonically w.r.t. $\ell$. Hence, (P3) is formulated as 
\begin{subequations}
	\begin{align}
		\text{(P3)}~~\min_{\ell} ~~ &  
		\ell\label{P3 obj}\\ 
		\operatorname{ s.t. } ~~
		& P(\sigma_\Delta^2(q^*),\ell)\geq P_0,\label{P3 cons}	\end{align}\label{P3}
\end{subequations}
\unskip
where $q^*$ is the solution to (P2).

\begin{proposition}
    [Solution to (P1)]\label{solving algorithm}
    Solving (P2) first and then solving (P3) yields the solution to (P1).
\end{proposition}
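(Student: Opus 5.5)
The plan is to exploit monotonicity twice: first the EPR in \eqref{EPE ql definition} with respect to each variable, then the accuracy function via Corollary~\ref{monotonicity accuracy}. Under CR we treat $q\in[0,Q]$ and $\ell\in[\ell_1,\ell_i]$ as continuous. Rewriting as in \eqref{EPE detail}, we have
\[
\mathrm{EPR}(q,\ell)=\frac{dq}{\frac{dq}{B\log_2(1+\gamma)}+b_1\ell+b_2}.
\]
Differentiating in $q$ gives numerator $d(b_1\ell+b_2)>0$, since $b_1,b_2>0$ are latency constants. Hence $\mathrm{EPR}$ is strictly increasing in $q$ for every fixed $\ell$. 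Since $b_1>0$, it is strictly decreasing in $\ell$ for every fixed $q$.

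Next we describe the feasible set of (P1). The latency constraint \eqref{comm constraint} involves only $q$ and reads $q\le \bar q\triangleq \min\{Q,\,T_{\max}B\log_2(1+\gamma)/d\}$. So $q^*=\bar q$ solves (P2). The accuracy constraint \eqref{acc constraint} couples the variables. By Corollary~\ref{monotonicity accuracy}, $P$ is decreasing in $\sigma_\Delta^2$ and increasing in $\ell$. In addition, $\sigma_\Delta^2(q)$ in \eqref{error variance cont.} is decreasing in $q$; one checks this on each unit interval and by continuity at integers. It follows that $P(\sigma_\Delta^2(q),\ell)$ is nondecreasing in $q$. Let $\ell_{\min}(q)$ denote the smallest $\ell$ satisfying \eqref{acc constraint} at bit-width $q$. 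This threshold is well defined because $P$ is continuous and increasing in $\ell$, and it is nonincreasing in $q$.

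The main step is an exchange argument. Take any feasible pair $(q,\ell)$. Then $q\le q^*$ and $\ell\ge \ell_{\min}(q)\ge \ell_{\min}(q^*)=\ell^*$, where $\ell^*$ is exactly the solution of (P3). The pair $(q^*,\ell^*)$ is feasible: latency holds by (P2), and accuracy holds by the definition of (P3). Chaining the two monotonicities gives
\[
\mathrm{EPR}(q,\ell)\le \mathrm{EPR}(q^*,\ell)\le \mathrm{EPR}(q^*,\ell^*),
\]
so $(q^*,\ell^*)$ is optimal for (P1).

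The main obstacle is the step $\ell_{\min}(q)\ge\ell_{\min}(q^*)$. Raising $q$ must never make the accuracy constraint harder, and this rests on $\sigma_\Delta^2(q)$ being monotone under the mixed-bit-width construction. The continuity and monotonicity of \eqref{error variance cont.} at integer $q$ must be checked explicitly, since $\alpha$ jumps from $0$ to $1$ there. I expect this to work: as $\alpha\to0$ the value is $\frac14\cdot\frac{(c_{\max}-c_{\min})^2}{12\cdot 2^{2q_0}}$, which equals the next integer's value at $\alpha=1$.

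Feasibility must also be addressed. If (P3) has no solution, then (P1) is infeasible, since no smaller $q$ can help. With that, the argument is complete.
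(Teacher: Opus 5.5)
Your proof is correct and follows essentially the same route as the paper's. The paper argues by contradiction: a better pair must have $q_0\le q^*$, hence $\ell_0\ge\ell^*$ by accuracy monotonicity, hence lower EPR via the form $B\log_2(1+\gamma)/(1+T_{\mathrm{comp}}/T_{\mathrm{comm}})$. That is just the contrapositive of your chain $\mathrm{EPR}(q,\ell)\le\mathrm{EPR}(q^*,\ell)\le\mathrm{EPR}(q^*,\ell^*)$, and your version is somewhat more careful. You explicitly check that the mixed-bit-width variance is continuous and decreasing in $q$, you use the non-strict $\ell\ge\ell^*$ where the paper writes $\ell_0>\ell^*$, and you handle the infeasible case.
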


\begin{proof}
    We prove this by contradiction. Suppose we have obtained the aforementioned solution $q^*$ and $\ell^*$. If there exist some $q_0$ and $\ell_0$ such that the resulting EPR is higher, then we must either have (\romannumeral 1) $q_0=q^*$, or (\romannumeral 2) $q_0<q^*$, as a result of the constraint \eqref{comm constraint}.
In situation (\romannumeral 1), to make EPR higher, we must have $\ell_0<\ell^*$. But this would ruin the constraint \eqref{P3 cons} based on the monotonicity. Hence, we omit this situation. In situation (\romannumeral 2), the resulting $\sigma_\Delta^2$ gets higher, which makes $\ell_0>\ell^*$ according to the monotonicity of $P(\sigma_\Delta^2(q^*),\ell)$. Now that $q_0\leq q^*$ and $\ell_0>\ell^*$, they contribute to smaller communication latency and higher computation latency, and thus degrade the EPR according to \eqref{EPE detail}. A contradiction appears here. This completes the proof.
\end{proof}

The solution for (P2) can be straightforwardly obtained as
\begin{equation} 
q^* = \frac{T_{\mathrm{max}}B\log_2{(1+\gamma})} {d}.\label{Q_continuous}
\end{equation}
According to Corollary~\ref{monotonicity accuracy}, $P(\sigma_\Delta^2(q^*),\ell)$ increases monotonically w.r.t. $\ell$. Based on this monotonicity,
the solution to (P3), $\ell^*$,  can be obtained by solving equation $P(\sigma_\Delta^2(q^*),\ell)= P_0$ w.r.t. $\ell$.
This can be solved numerically via efficient binary search, and we denote the solution as $\ell^*=P^{-1}(\sigma_\Delta^2(q^*),P_0)$, where we use $P^{-1}(\cdot,\cdot)$ to denote the inverse of function $P(\sigma_\Delta^2,\ell)$ w.r.t. $\ell$. According to Proposition~\ref{solving algorithm}, the solution to (P1) is given by $\{q^*,\ell^*\}$.

\subsubsection{Practical Solution}\label{CA2I restriction}

We remove the CR of $q$ and $\ell$ to yield the practical channel-adaptive AI algorithm. Proposition~\ref{solving algorithm} poses no requirement on the variable property, and hence, it is applicable in this discrete context as well. Therefore, based on previous results, we can derive the solution to (P2) as
\begin{equation} 
q^* =  \lfloor \frac{T_{\mathrm{max}}B\log_2{(1+\gamma})} {d} \rfloor_\mathcal{Q},\label{Q_discrete}
\end{equation}
where $\lfloor x \rfloor_\mathcal{Q}$ is the floor function that selects the largest number in the set $\mathcal{Q}$ less than or equal to $x$. Since a larger bit-width \(q\) increases the number of transmitted bits and hence the communication latency, we conservatively round \(q\) downward. This guarantees that the rounded bit-width still satisfies the air-latency constraint \(T_{\mathrm{comm}}\leq T_{\max}\). Subsequently, the solution to (P3) is given by 
\begin{equation}
    \ell^*=\big\lceil P^{-1}(\sigma_\Delta^2(q^*),P_0)\big\rceil_{\mathcal{L}}\label{l_discrete} 
\end{equation}
as the solution to (P3), where $q^*$ is determined by
\eqref{Q_discrete} and $\lceil x \rceil_{\mathcal{L}}$ is the ceiling function that selects the smallest number in set $\mathcal{L}$ greater than or equal to $x$. 
Since a larger traversal depth \(\ell\) improves the inference accuracy, we round \(\ell\) upward. This ensures that the traversal depth is no smaller than the continuous solution and therefore satisfies the accuracy constraint under the proposed analytical model.
Note that when $P^{-1}(\sigma_\Delta^2(q^*),P_0)>\ell_{i}$, $\ell^*$ does not exist. In this case, we set $\ell^*=\ell_{i}$ and allow for a zero EPR. Now that we have solved (P1) under practical settings, we summarize the channel-adaptive AI algorithm in Algorithm \ref{channel-adaptive AI (evaluation phase)}.

\begin{algorithm}
 \caption{Channel-adaptive AI}
 \label{channel-adaptive AI (evaluation phase)}
 \begin{algorithmic}[1]
 \renewcommand{\algorithmicrequire}{\textbf{Input:}}
 \renewcommand{\algorithmicensure}{\textbf{Output:}}
 \REQUIRE Partitioned backbone model $\Phi=\{G,F\}$, intermediate CNNs $\{F_{\mathrm{cnn},\ell}\}_{\ell\in\mathcal{L}}$, inference tasks with $\{\mathbf{x}_i\}_{i=1}^N$.

\FOR {$i = 1$ to $N$}
    \STATE Use $G$ to transform $\mathbf{x}_i$ to $\mathbf{z}_i$.
    \STATE Calculate the $q$ and $\ell$ according to \eqref{Q_discrete} and \eqref{l_discrete}.
    \STATE Transmit $\mathbf{z}_i$ based on \eqref{x_transmission_digital}, and the received one is $\tilde{\mathbf{z}}_i$.
    \STATE Use $F_{\mathrm{cnn},\ell} \circ F_\ell \circ \cdots \circ F_1$ to transform $\tilde{\mathbf{z}}_i$ to $\tilde{\theta}_{\ell,i}$.
    \STATE Obtain the inferred label $\hat{y}_i$ based on \eqref{classifier}.
    \STATE Transmit back the $\hat{y}_i$ to edge device.
\ENDFOR
\ENSURE Inferred labels $\{\hat{y}_i\}_{i=1}^N$ for all samples.
 \end{algorithmic} 
 \end{algorithm}

\subsection{Discussion on Performance and Methodology}

\subsubsection{Throughput versus EPR}

The performance of channel-adaptive AI with CR is shown in Fig.~\ref{CQCL}. 
With a fixed accuracy threshold, EPR increases monotonically as SNR increases. Furthermore, when the accuracy threshold is lowered, EPR also increases for a given SNR. The EPR curves in Fig.~\ref{CQCL} are analogous to those of communication rate relative to SNR at different \emph{bit error rates} (BERs), where EPR corresponds to the communication rate, and $P_0$ to BER.

\subsubsection{Computational Complexity versus SNR and Accuracy}

Since the traversal depth, which determines the receiver complexity, can be calculated when the target accuracy and SNR are set, a visualization of channel-adaptive AI with CR is shown in Fig.~\ref{vis CQCL}. It can be shown that when the SNR decreases or the target accuracy increases, the required traversal depth increases, because only by trading more computation can the algorithm mitigate the distortion and achieve a satisfactory inference accuracy.

\begin{figure}[t]
\centering
        \subfigure[EPR.]{\includegraphics[width=0.47\linewidth]{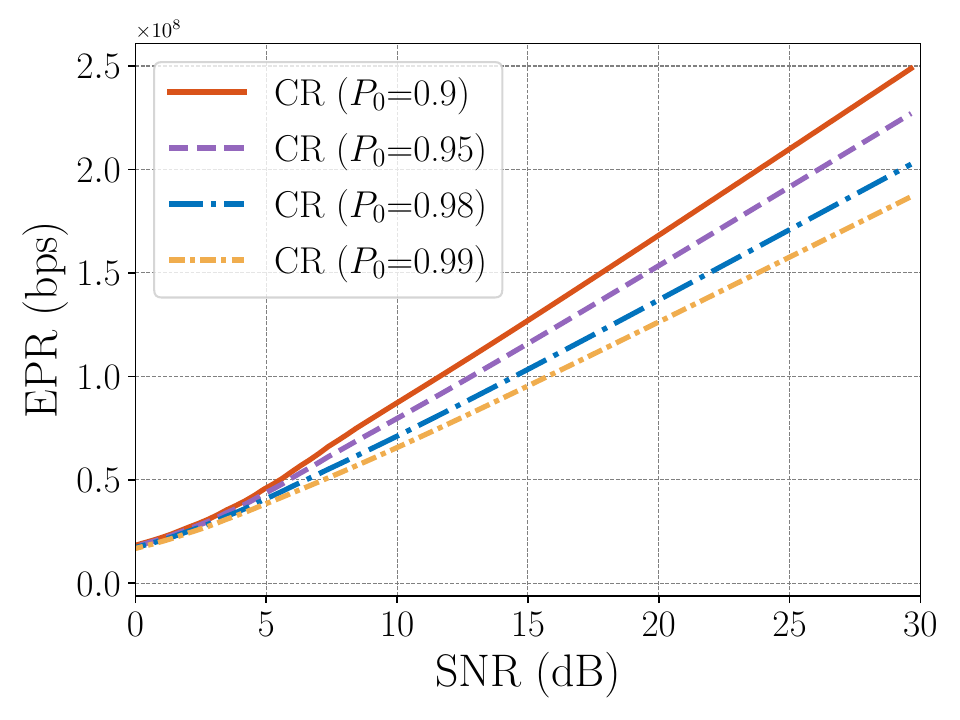}\label{CQCL}}
    \subfigure[Required traversal depth.]{\includegraphics[width=0.47\linewidth]{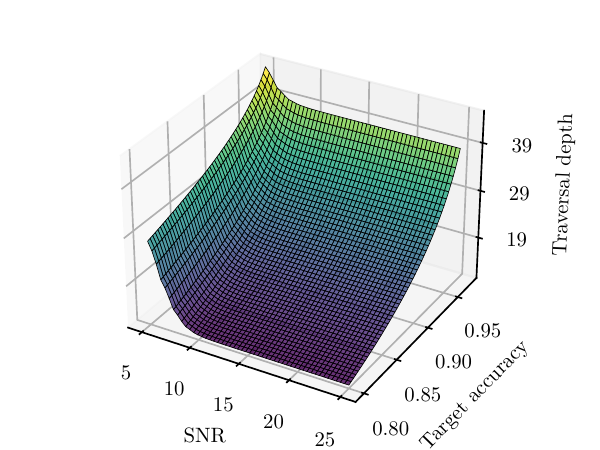}
    \label{vis CQCL}}
    \caption{Performance of channel-adaptive AI with CR.}\label{CQCL theory}
    \vspace{0mm}
  \end{figure}

\begin{table}[t]
\small
\caption{Different sets of exit layers for implementing channel-adaptive AI.}
\begin{adjustbox}{center}
\begin{tabular}{|c|c|}
\hline
\textbf{\# of exit layers} & \textbf{Set of exit layers} \\ \hline
2                & 9, 49                    \\ \hline
3                & 9, 39, 49                \\ \hline
4                & 9, 29, 39, 49            \\ \hline
5                & 9, 19, 29, 39, 49        \\ \hline
\end{tabular}
\end{adjustbox}
\label{DQDL region setting theory}
\end{table}

\subsubsection{Effect of the Set of Exit Layers}

As for the practical channel adaptive AI without CR, we set the target accuracy as $P_0=0.9$ and the considered sets of exit layers are shown in Table \ref{DQDL region setting theory}. 
As shown in Fig.~\ref{DQDL}, as the number of exit layers increases, the EPR also increases. Channel-adaptive AI with CR represents the theoretical limit for the practical one. This is due to the fact that a greater number of exit layers provides more candidates for exit, thereby increasing the potential to reduce computation latency while satisfying the accuracy requirement. Consequently, as the number of exit layers increases, the EPR of the practical channel-adaptive AI without CR can approach that with CR.
Additionally, we introduce a non-adaptive algorithm as a benchmark, where the bit-width is fixed, and only one exit layer (i.e., 49) is utilized. Since both parameters remain constant regardless of channel states, this approach is classified as non-adaptive. This is analogous to non-adaptive transmission, where the transmission rate and power are fixed \cite{goldsmith1997variable}. The curves presented in Fig.~\ref{DQDL} demonstrate the superiority of channel-adaptive AI over its non-adaptive counterpart.
A visualization of the practical channel-adaptive algorithm is shown in Fig.~\ref{vis DQDL}. 
Due to the discrete bit-width and layers, the resulting surface appears stepped, which distinguishes it from channel-adaptive AI with CR. 
Furthermore, the application of the floor function for \(q\) and the ceiling function for \(\ell\) generally results in a greater required traversal depth in practical channel-adaptive AI compared with channel-adaptive AI with CR. This is because rounding \(q\) downward reduces the bit-width, which increases the quantization distortion and therefore requires a larger traversal depth to maintain the target accuracy; the ceiling operation further ensures that the selected traversal depth is no smaller than the required continuous value. Consequently, whenever a feasible discrete solution exists, the target-accuracy constraint remains satisfied after discretization. If no feasible discrete solution exists, the current channel state is declared infeasible, and the EPR is set to zero; in this case, the worst-case accuracy shortfall is upper-bounded by \(P_0-1/J\) under the analytical model.

\begin{figure}[t]
\centering
        \subfigure[EPR.]{\includegraphics[width=0.47\linewidth]{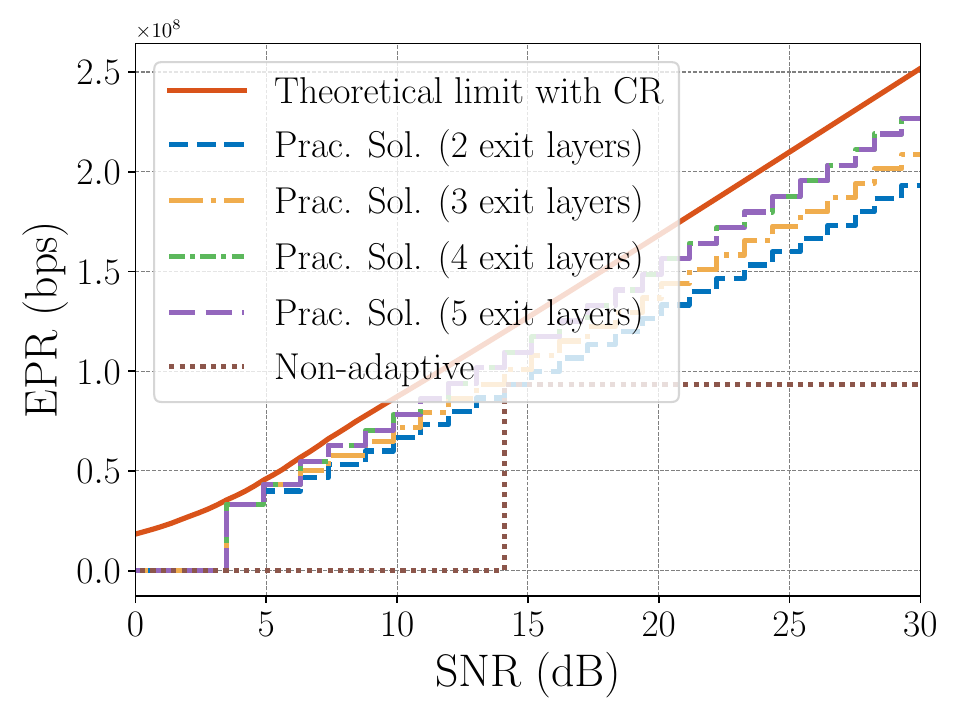}\label{DQDL}}
    \subfigure[Required traversal depth.]{\includegraphics[width=0.47\linewidth]{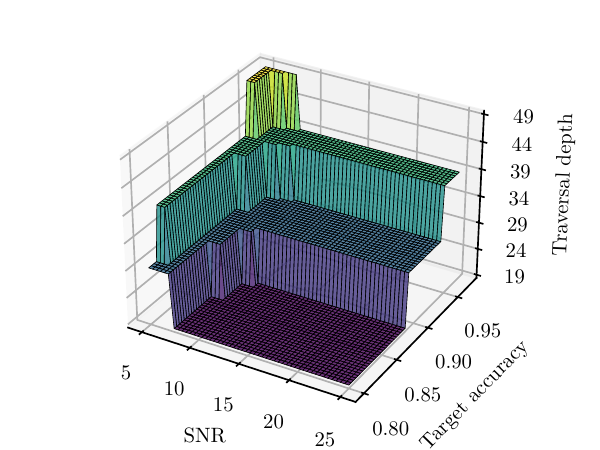}
    \label{vis DQDL}}
    \caption{Performance of practical channel-adaptive AI.}\label{DQDL theory}
    \vspace{-5mm}
  \end{figure}

\subsubsection{Comparison with Learning-Based Optimization}

A representative example of learning-based optimization is~\cite{li2026task}, where an adaptive policy is learned from interaction data instead of being derived from an explicit accuracy model. A similar learning-based idea could be applied to our setting by treating the bit-width \(q\) and traversal depth \(\ell\) as actions. Such a method is flexible when the performance function is difficult to model analytically, but it typically requires interaction data, policy training, and reward or constraint design. In contrast, our method derives an analytical relationship among quantization distortion, traversal depth, and inference accuracy, which leads to an interpretable and low-complexity adaptation rule under explicit air-latency and accuracy constraints. When the environment or AI architecture changes, the model parameters or the accuracy--depth--distortion mapping can be re-estimated using validation data. Conservative accuracy margins or periodic recalibration can also be used to reduce the risk of constraint violations caused by model mismatch.

\section{Experimental Results}\label{experiments}

\subsection{Experimental Settings}\label{Experimental Setting}

\begin{itemize}
    \item \textbf{System and communication setting:} We consider an edge inference system where the backbone model ResNet-152 with $50$ bottleneck layers is split. The first $11$ bottleneck layers are in the edge device, and the remaining $L=39$ bottleneck layers are in the edge server. The \(11/39\) split follows the standard ResNet-152 architecture, whose 50 bottleneck blocks are organized into four residual groups with \(3\), \(8\), \(36\), and \(3\) blocks, respectively~\cite{he2016deep}. Thus, assigning the first \(11=3+8\) blocks to the device and the remaining \(39=36+3\) blocks to the server preserves the residual-group boundary while leaving sufficient server-side depth for adaptive exiting. The set of exit layers that supports adaptive exit is: $9, 19, 24, 29, 34, 37$. The total bandwidth is $B=100\ \mathrm{MHz}$, and the maximal air latency is $T_{\mathrm{max}}=12\ \mathrm{ms}$. The computation speeds of the edge device and the server are set to $0.1$ and $0.5$ TFLOPS, respectively. The maximal bit-width per feature is $Q=32$.
    \item \textbf{Metrics:} EPR and inference accuracy are two metrics to represent the inference performance in the experiments. The former is analogous to the communication rate, illustrating how efficiently the system processes bits in unit inference time. The latter one is obtained by using part of the test dataset to perform the inference process with IID channel realizations and calculating the ratio of correctly classified ones. $2000$ IID inference tasks are performed, and the performance is averaged.
    \item \textbf{Real-world datasets:} CIFAR-10 is used in our experiments. It comprises a training dataset with $50,000$ labeled samples and a test set with $10,000$ labeled samples. The data have $J=10$ classes. The training dataset is used to train the ResNet-152 model and the accompanying CNNs. The test set is randomly split into two sub-datasets with the ratio of $8:2$. The first sub-dataset, which we call the validation dataset, is used to estimate the hyperparameters in the channel-adaptive algorithm. And the remaining part, which we call the (final) test dataset, is used to evaluate the inference performance.
    \item \textbf{Channel-adaptive AI (CA$^2$I):} Consider the algorithm designed in the preceding section, for which the traversal depth for inference adapts to the channel state according to the channel-adaptive AI algorithm. Offline experiments with the validation dataset are carried out to calibrate the accuracy model and obtain the accuracy--depth mapping for each bit-width. Therefore, the bit-width can be calculated based on \eqref{Q_discrete} and the minimal possible exit layer can be obtained based on the mappings by using the binary search. The different sets of exit layers are shown in Table \ref{DQDL resion setting}.
    \item \textbf{Benchmarking scheme:} We consider a non-adaptive AI algorithm as a benchmarking scheme. This is analogous to fixed-rate transmission baselines in adaptive modulation systems~\cite{goldsmith1997variable}, and is used to characterize the gain brought by channel-adaptive bit-width and traversal-depth selection. For this scheme, the bit-width and traversal depth are fixed regardless of the channel states. For example, we can set $q=12$ and $\ell=37$, respectively. Additional combinations of $q$ and $\ell$ are specified later.
\end{itemize}

\begin{table}[t]
\small
\caption{Different sets of exit layers for implementing channel-adaptive AI.}
\begin{adjustbox}{center}
\begin{tabular}{|c|c|}
\hline
\textbf{\# of exit layers} & \textbf{Set of exit layers} \\ \hline
2                & 9, 37                    \\ \hline
3                & 9, 34, 37                \\ \hline
4                & 9, 29, 34, 37            \\ \hline
5                & 9, 24, 29, 34, 37        \\ \hline
6                & 9, 19, 24, 29, 34, 37        \\ \hline
\end{tabular}
\end{adjustbox}
\label{DQDL resion setting}
\end{table}

\subsection{Validation of Inference Accuracy}

\begin{figure*}[t]
\centering
    \subfigure[\(q=8\).]{\includegraphics[width=0.3\linewidth]{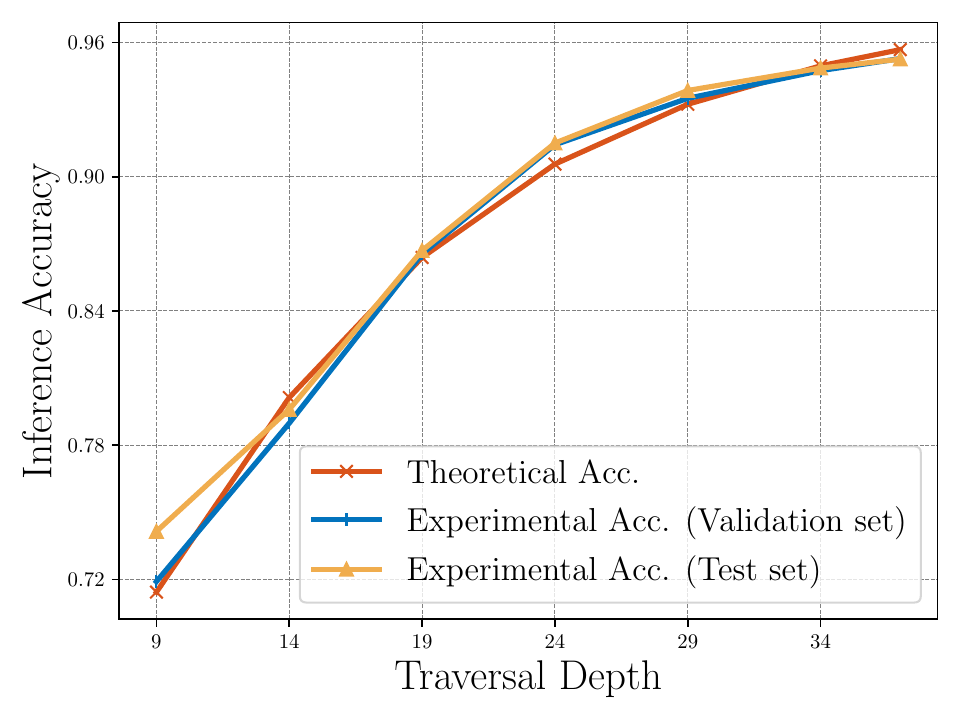}\label{acc_fit_q8}}
    \subfigure[\(q=10\).]{\includegraphics[width=0.3\linewidth]{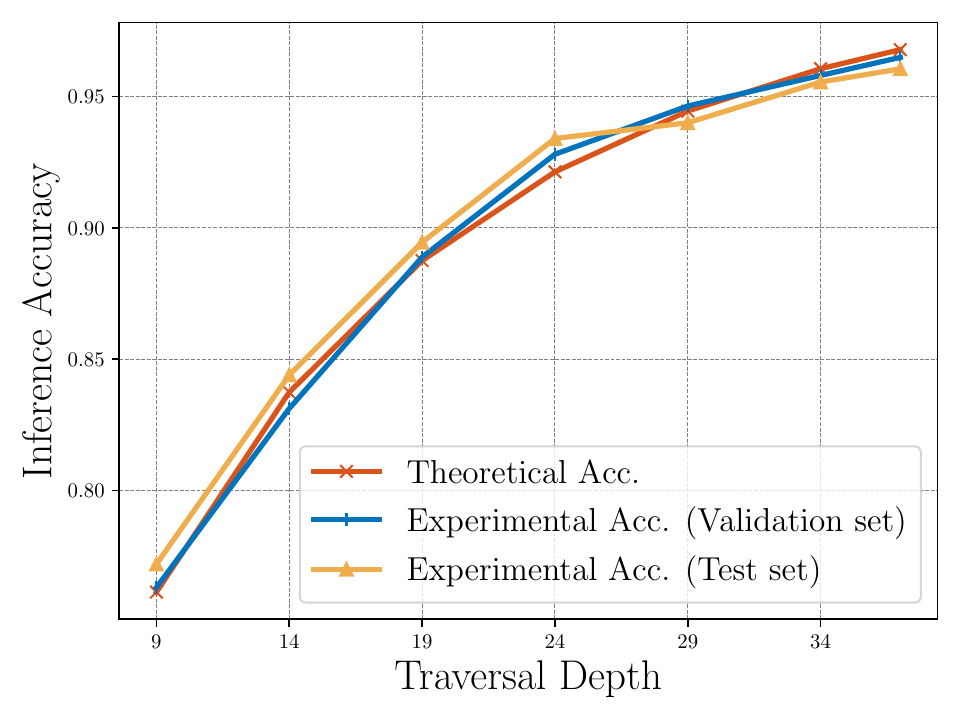}\label{acc_fit_q10}}
    \subfigure[\(q=30\).]{\includegraphics[width=0.3\linewidth]{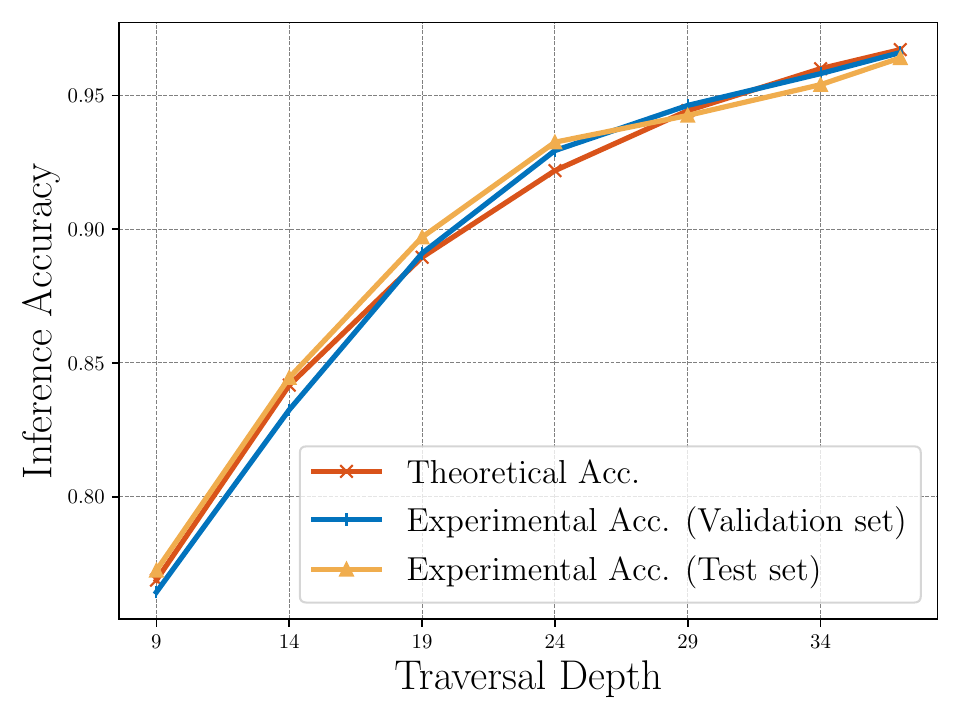}\label{acc_fit_q30}}
    \caption{The theoretical and experimental inference accuracy under different bit-widths.}
    \vspace{5mm}
    \label{acc_theory_empirical}
\end{figure*}

As shown in Fig.~\ref{acc_theory_empirical}, the theoretical accuracy and the experimental accuracy on both the validation set and the test set show a similar increasing trend as the traversal depth increases under different bit-widths. In particular, the case \(q=30\) corresponds to an almost distortion-free setting. The theoretical accuracy is slightly more aligned with the validation-set accuracy because the model is calibrated using the validation set. Nevertheless, since the validation set and the test set are drawn from nearly identical data distributions, their experimental accuracies are also similar. This supports the reliability of the analytical model used in the proposed channel-adaptive algorithm.

Fig.~\ref{acc noise} shows how the inference accuracy is influenced by the bit-width and traversal depth. As shown in Fig.~\ref{acc vs Q}, the inference accuracy increases monotonically w.r.t. the bit-width when the layer is fixed, because a higher bit-width means less distortion on features, which leads to better inference performance. On the other hand, as shown in Fig.~\ref{acc vs l in Q}, when the bit-width is fixed, more layers mean higher inference accuracy, which aligns with the error-free scenarios. This phenomenon is more obvious when the bit-width is higher, because the distortion caused by extremely low bit-width yields high randomness.

\begin{figure}[t]
\centering
    \subfigure[Inference accuracy w.r.t. $q$.]{\includegraphics[width=0.48\linewidth]{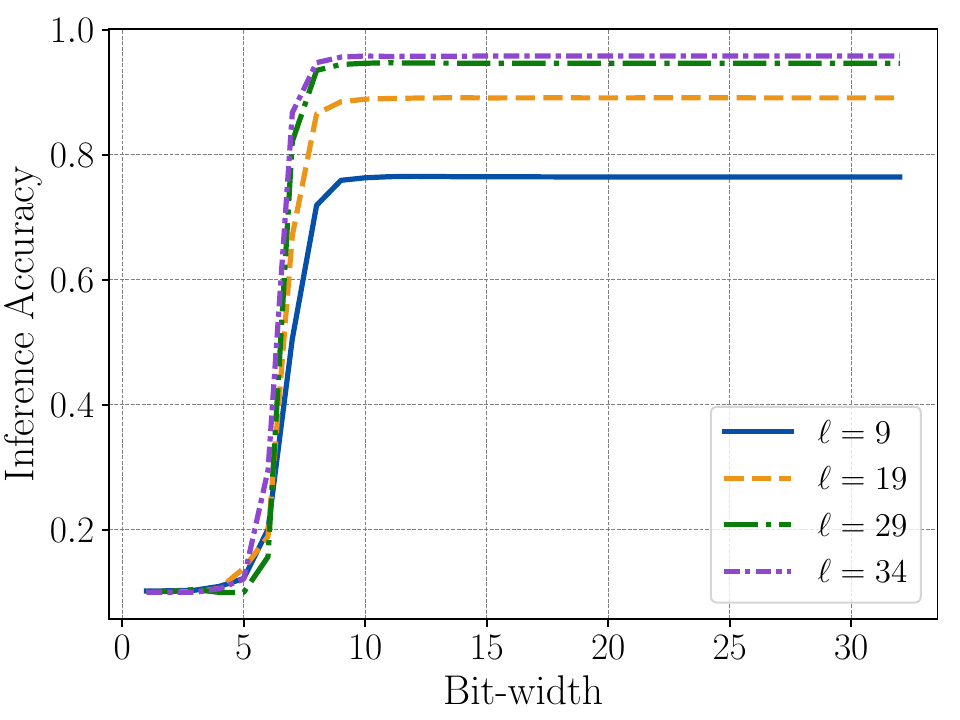}\label{acc vs Q}}
    \subfigure[Inference accuracy w.r.t. $\ell$.]{\includegraphics[width=0.48\linewidth]{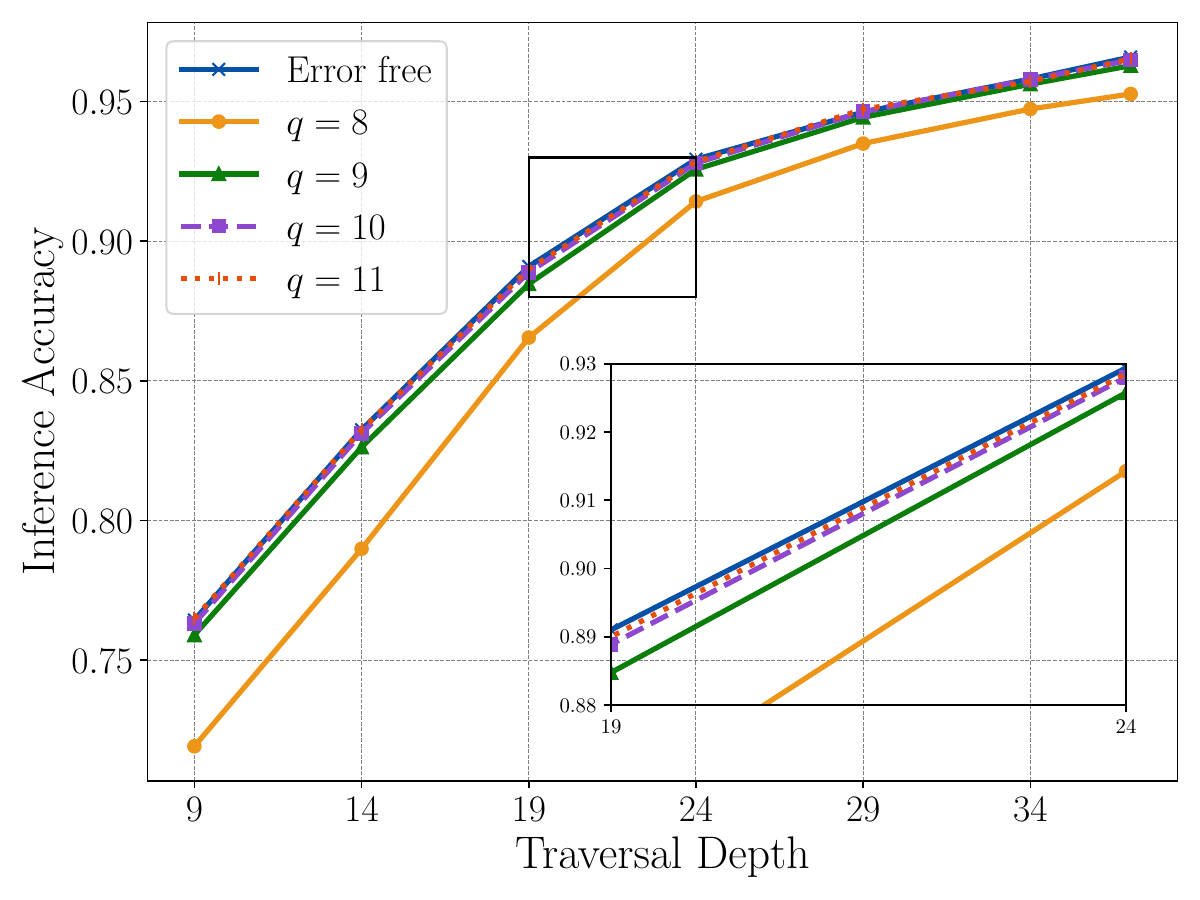}\label{acc vs l in Q}}
    \caption{The performance of the inference model with different bit-width and traversal depth.}
    \vspace{-5mm}
    \label{acc noise}
  \end{figure}

\subsection{Inference Performance of Channel-Adaptive AI}

We compare the performance of channel-adaptive AI with its counterpart, the non-adaptive AI algorithm, as shown in Fig.~\ref{CA2I performance}. In Fig.~\ref{EPE DQDL}, it can be seen that the EPR of channel-adaptive AI increases monotonically w.r.t. both the SNR and the number of exit layers. This is because higher SNRs allow for higher bit-width and smaller traversal depth, which collaboratively increase the EPR. And when the number of exit layers increases, there is potential that fewer layers could be used while satisfying the accuracy threshold, which increases the EPR. As for the non-adaptive AI algorithm, there is a possibility that the fixed bit-width cannot guarantee the communication constraint, which leads to zero EPR when the SNR is low. Omitting the fixed bit-width, the non-adaptive AI algorithm can be considered as a special case of channel-adaptive AI, i.e., the number of exit layers is 1. Therefore, its performance is not as good as the channel-adaptive AI with more exit layers. Specifically, when the SNR is $15 
\ \mathrm{dB}$, channel-adaptive AI with 5 exit layers demonstrates a $34.3\%$ higher EPR compared with the non-adaptive algorithm. Furthermore, at an SNR of $25 
\ \mathrm{dB}$, the performance gain is over $100\%$.
Fig.~\ref{accuracy DQDL} shows the actual inference accuracy of the schemes in different SNRs. In low-SNR scenarios, the inference accuracy has not reached the target accuracy, as the distortion caused by the channel cannot be mitigated even when all the layers are utilized. Another observation is that in high-SNR scenarios, increasing the number of exit layers decreases the inference accuracy, while still satisfying the accuracy requirement. This illustrates the adaptability of channel-adaptive AI, which sacrifices some accuracy when necessary to achieve a higher EPR. As for the non-adaptive AI algorithm, since the features cannot be transmitted successfully within the transmission period, random guessing is used, which yields an inference accuracy of $1/J=0.1$.

\begin{figure}[t]
\centering
        \subfigure[The EPR.]{\includegraphics[width=0.48\linewidth]{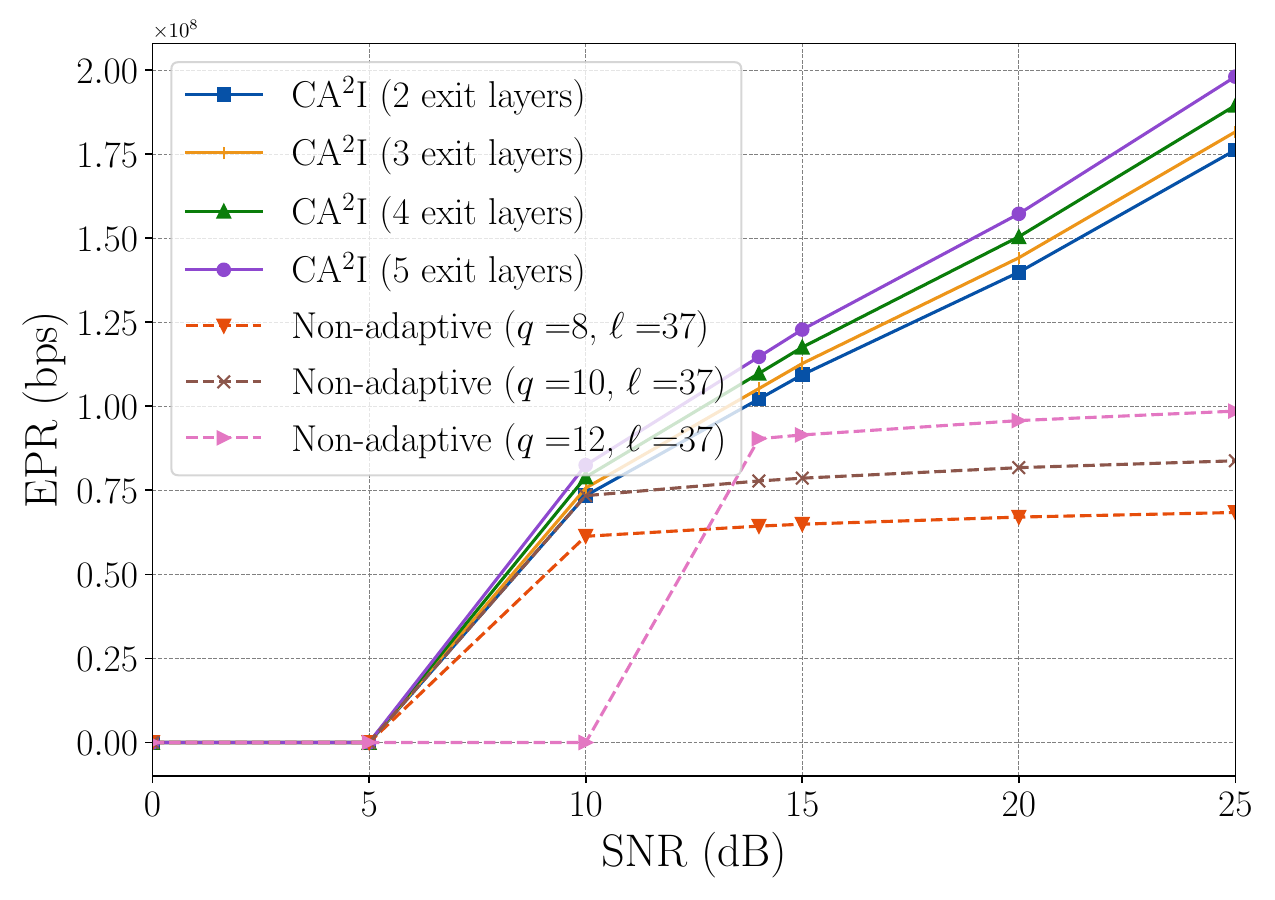}\label{EPE DQDL}}
    \subfigure[The inference accuracy.]{\includegraphics[width=0.48\linewidth]{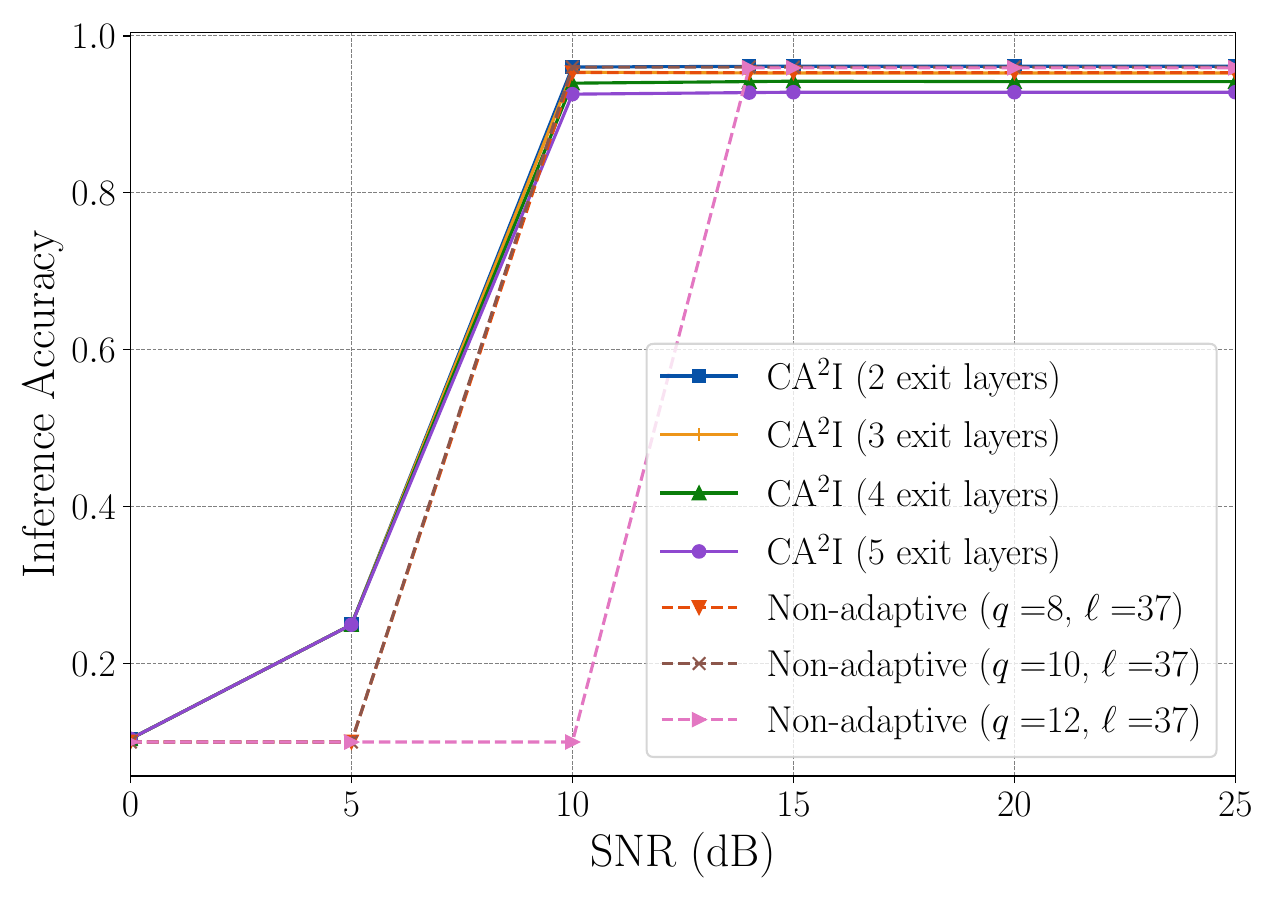}\label{accuracy DQDL}}
    \caption{The performance of channel-adaptive AI algorithm with different numbers of exit layers. The target accuracy is $90\%$.}
    \vspace{-5mm}
    \label{CA2I performance}
  \end{figure}

Fig.~\ref{DQDL EPE different acc0} illustrates the performance of channel-adaptive AI across different target accuracy levels. The curves align with the expectation that a lower target accuracy permits a higher EPR. Specifically, when the SNR is $25\ \mathrm{dB}$, reducing the target accuracy from $0.95$ to $0.9$ and $0.85$ results in associated EPR increases of $9.04\%$ and $14.2\%$, respectively. 

\begin{figure}[!]
\centering
{\includegraphics[width=0.6\linewidth]{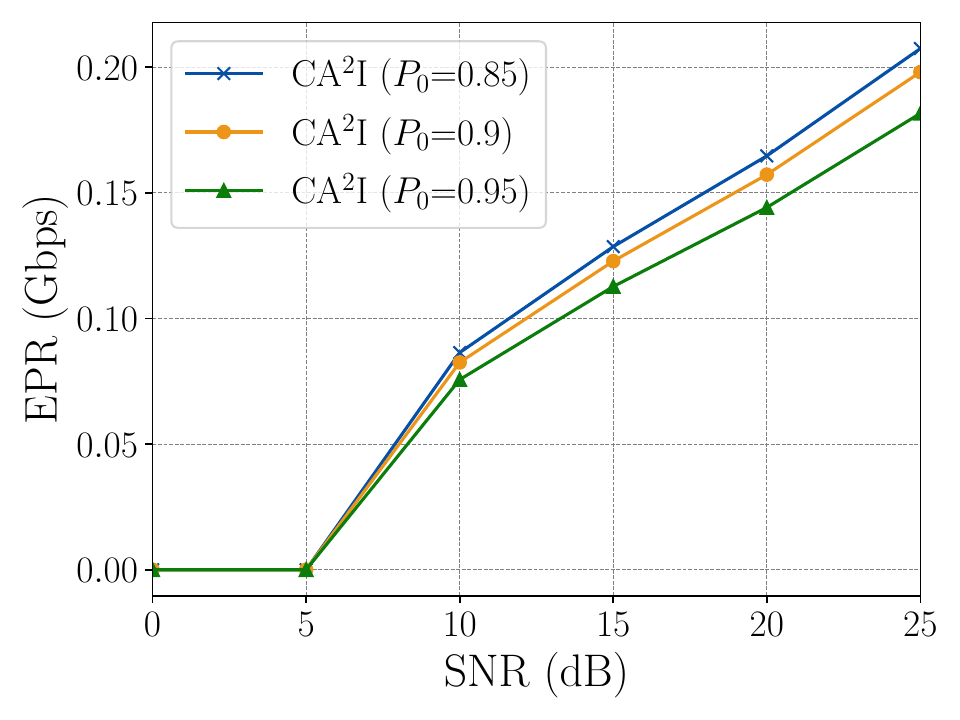}}
\caption{The performance of channel-adaptive AI algorithm with different target accuracy. The number of exit layers is set to $6$.}
\vspace{-5mm}
\captionsetup{justification=justified}
\label{DQDL EPE different acc0}
  \end{figure}

\section{Concluding Remarks}\label{concluding remarks}

In this paper, we present a novel channel-adaptive AI framework that adaptively controls the quantization and computational complexity according to the channel state. Such a computation-communication integrated design maximizes the inference throughput while providing a guarantee on air latency and inference accuracy, thereby supporting latency-sensitive edge AI applications. To develop the framework, we propose a tractable model for inference accuracy, wherein the intermediate angular features are modeled as a mixture of von Mises distributions. Experiments demonstrate the superiority of the channel-adaptive AI algorithm compared with its conventional non-adaptive counterpart.

Several directions for further research merit exploration. First, the effects of fast fading can be incorporated into the communication model. In this context, metrics such as EPR and theoretical accuracy can be replaced with expected values relative to channel realization; the optimization problem can be reformulated accordingly. Additionally, techniques to mitigate unsuccessful inference due to channel outage need to be designed. Furthermore, the joint design of adaptive transmission (i.e., modulation, power control, and beamforming) and adaptive computational complexity is also a promising direction for future work. Additionally, although the proposed MvM-based performance model is tailored to classification tasks, the high-level principle of channel-adaptive AI can be extended to other inference tasks with intermediate exits. For tasks such as object detection, semantic segmentation, and regression, deeper traversal generally provides higher task performance at the cost of increased computation. Developing task-specific performance models for them and extending the proposed channel-adaptive algorithm to these tasks are left for future work.

\section*{Appendix}
\appendices
\renewcommand{\thesubsection}{\Alph{subsection}}

\subsection{Parameter Estimation of MvM}\label{Parameter Estimation of MvM}

Getting the exact distribution of the angular features necessitates the estimation of $\kappa_{\Delta,\ell}$, with finite samples.
We start by investigating the $N_j$ samples for class $j$: $\tilde\theta_{\ell,1}^{(j)}, \tilde\theta_{\ell,2}^{(j)}, \dots, \tilde\theta_{\ell,N_j}^{(j)}$.
Referring to \cite{kutil2012biased}, a biased estimation of the $\kappa^{(j)}_{\Delta,\ell}$ is given by
\begin{equation}
\hat{\kappa}^{(j)}_{\Delta,\ell}=A^{-1}\biggl(\frac{1}{N_j}\sqrt{\big(\sum_{n=1}^{N_j} \cos(\tilde\theta_{\ell,n}^{(j)})\big)^2+\big(\sum_{n=1}^{N_j}\sin(\tilde\theta_{\ell,n}^{(j)})\big)^2}\biggl).\label{estimate kappa}
\end{equation}

To align with the identical $\kappa$ assumption, we further estimate the $\kappa_{\Delta,\ell}$ in the MvM model as
\begin{equation}
\hat{\kappa}_{\Delta,\ell} = \frac{1}{J}\sum_{j=1}^J\hat{\kappa}^{(j)}_{\Delta,\ell}.\label{estimate kappa l}
\end{equation}

\subsection{Proof of Proposition \ref{prop distorted distribution}}
\label{Proof of P1}

In this subsection, we prove that the channel distorted angular feature $\tilde{\theta}_\ell|j$ follows the vM distribution for arbitrary quantization level and traversal depth. Let $f_\ell(\cdot) =(F_{\mathrm{cnn},\ell} \circ F_\ell \circ \cdots \circ F_1)(\cdot)$ denote the composite mapping from feature $\tilde{\mathbf{z}}$ to angular feature $\tilde{\theta}_\ell$. The details of the proof are narrated as follows.

\subsubsection{Differentiability Assumption}

We assume the composite function $f_\ell(\cdot)$ is differentiable w.r.t. the input $\mathbf{z}$. Given that $f_\ell(\cdot)$ comprises the backbone network $F_\ell \circ \cdots \circ F_1$ and the intermediate CNN $F_{\mathrm{cnn},\ell}$, we examine the differentiability of each component as follows.
\begin{itemize}
    \item \textbf{Backbone:} The ResNet-152 backbone employed in this work comprises standard convolutional and pooling layers. Although ReLU activations and max-pooling operations are non-differentiable at specific singularities, these occur on sets of measure zero and are resolved via subgradients. Consequently, we treat the backbone as differentiable.
    \item \textbf{Intermediate CNN:} 
    For $F_{\mathrm{cnn},\ell}(\cdot)=(\mathrm{atan}2\circ \phi_\ell)(\cdot)$, the projection $\phi_\ell$ inherits the differentiability of the backbone network as it shares the same operations. However, the $\mathrm{atan}2$ function operator introduces singularities: it is undefined at the origin and exhibits a discontinuity along the negative real axis (the branch cut), where the output jumps between $\pi$ and $-\pi$. The singularity at the origin is negligible as it constitutes a set of measure zero. The discontinuity can be solved by introducing an auxiliary function $f^\dagger_\ell(\cdot)$ that serves as a locally continuous surrogate for $f_\ell(\cdot)$. Specifically, to compensate for the jump at the branch cut, we define $f^\dagger_\ell(\cdot) \triangleq f_\ell(\cdot) - 2\pi$ for transitions from $-\pi$ to $\pi$, and $f^\dagger_\ell(\cdot) \triangleq f_\ell(\cdot) + 2\pi$ for the reverse transition. Accordingly, we define the gradient at the branch cut as $\nabla f_\ell(\mathbf{z}) \triangleq \nabla f^\dagger_\ell(\mathbf{z})$. Overall, the gradient is given by
\begin{equation}
\begin{split}
    \mathbf{a}_\ell= & \ \nabla f_\ell(\mathbf{z}) \\
    =&\left( \prod_{k=1}^{\ell} \left( \frac{\partial F_k}{\partial \mathbf{z}_{k-1}} \right)^\top \right) \left( \frac{\partial \phi_\ell}{\partial \mathbf{z}_\ell} \right)^\top \left( \frac{1}{\|\mathbf{v}_\ell\|^2} \begin{bmatrix} -v_y \\ v_x \end{bmatrix}\right), 
\end{split}
     \label{chain rule}
\end{equation}
where $\mathbf{z}_{0}\triangleq \mathbf{z}$, $\mathbf{z}_{k}=F_k(\mathbf{z}_{k-1})$ for $k=1,\cdots,\ell$, and $\mathbf{v}_\ell = \phi_\ell(\mathbf{z}_\ell)=[v_x,v_y]^\top$ denotes the 2D projected feature. The last term in \eqref{chain rule} comes from the standard derivative of $\mathrm{atan2}$~\cite{ahlfors1979complex}.
\end{itemize}
Based on the preceding analysis, we proceed under the assumption that $f_\ell(\cdot)$ is differentiable almost everywhere.

\subsubsection{Proof of the Distorted Angular Feature Distribution}

Let $\boldsymbol{\epsilon} = \tilde{\mathbf{z}} - \mathbf{z}$ denote the quantization distortion vector, modeled as IID uniform random variables with zero mean and variance $\sigma_\Delta^2$. Let $\bar{\theta}_\ell \triangleq f_\ell(\mathbf{z})$ represent the undistorted angular feature (when $\sigma_\Delta^2=0$), while $\tilde{\theta}_\ell=f_\ell(\tilde{\mathbf{z}})$ is the distorted one.
We analyze the distribution of $\tilde{\theta}_\ell$ using a first-order Taylor expansion in the neighborhood of $\mathbf{z}$:
\begin{equation}
   \begin{split}
\tilde{\theta}_\ell = f_\ell(\tilde{\mathbf{z}}) &= f_\ell(\mathbf{z}+\boldsymbol{\epsilon}) \\
       &\overset{(a)}\approx \Big( \bar\theta_\ell + \mathbf{a}_\ell^T\boldsymbol{\epsilon} \Big) \bmod 2\pi \\
       &\overset{(b)}= \bigg( \bar\theta_\ell + \underbrace{\big(\mathbf{a}_\ell^T\boldsymbol{\epsilon} \bmod 2\pi\big)}_{\triangleq \epsilon_\ell} \bigg) \bmod 2\pi,
   \end{split}
   \label{first-order Taylor}
\end{equation}
where (a) comes from the first-order Taylor approximation in the angular domain. Note that when $f_\ell(\mathbf{z})$ approaches $\pm\pi$, the term $\bar\theta_\ell + \mathbf{a}_\ell^T\boldsymbol{\epsilon}$ may fall outside the range $(-\pi, \pi]$. The modulo operation $\bmod$ corrects this by wrapping the value back into the valid domain, ensuring angular periodicity. For values already within bounds, the operation leaves them unchanged. (b) comes from the modular arithmetic identity $(a + b) \bmod c = ((a \bmod c) + (b \bmod c)) \bmod c$~\cite{hua2012introduction}.
Assuming the dimensionality of $\boldsymbol{\epsilon}$ is sufficiently large and that Lindeberg's condition holds, the Central Limit Theorem (CLT) suggests that $\mathbf{a}_\ell^T\boldsymbol{\epsilon}$ converges to a Gaussian distribution $\mathcal{N}\big(0, \sigma_\Delta^2 a_\ell\big)$~\cite{durrett2019probability}. Consequently, the term $\epsilon_\ell$ follows a wrapped Gaussian distribution. To facilitate the approximation of this distribution, we introduce Lemma~\ref{wrapped Gaussian}.

\begin{lemma}[Approximation of wrapped Gaussian with vM distribution~\cite{mardia2009directional}]\label{wrapped Gaussian}
    Consider a random variable $\beta \sim \mathcal{N}(\mu, \sigma^2)$. Its wrapped version, defined as $\omega \triangleq \beta \pmod{2\pi}$, can be approximated by a vM distribution:
\begin{equation}
    \psi \sim v\mathcal{M}\Big(\mu \bmod 2\pi, \, A^{-1}\big(\exp(-\sigma^2/2)\big)\Big).
\end{equation}
\end{lemma}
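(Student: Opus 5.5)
The natural route is the standard moment-matching argument from directional statistics. "Approximated by" here means that $\omega$ and the vM distribution $\psi$ have the same mean direction and the same mean resultant length, i.e., the same first trigonometric moment. The proof therefore computes the first circular moment of the wrapped Gaussian, computes that of the vM distribution, and equates them.

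First, compute the characteristic function of $\beta$ at integer frequencies. Since $e^{ip\omega}=e^{ip\beta}$ for every integer $p$ (wrapping by $2\pi$ leaves $e^{ip\cdot}$ unchanged), the trigonometric moments of $\omega$ coincide with the Gaussian characteristic function:
\[
\mathbb{E}\big[e^{ip\omega}\big]=\exp\!\big(ip\mu-p^2\sigma^2/2\big).
\]
For $p=1$ this gives mean direction $\mu \bmod 2\pi$ and mean resultant length $\rho=\exp(-\sigma^2/2)$.

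Second, for $\psi\sim v\mathcal{M}(\mu',\kappa)$ with density as in \eqref{VM distribution}, compute $\mathbb{E}[e^{i\psi}]$. Shift by $\mu'$ and use symmetry, so the sine part vanishes. The cosine part equals
\[
\frac{1}{2\pi I_0(\kappa)}\int_{-\pi}^{\pi}\cos x\, e^{\kappa\cos x}\,dx=\frac{I_1(\kappa)}{I_0(\kappa)}=A(\kappa),
\]
using the integral representation of $I_1$ given in Proposition~\ref{prop distorted distribution}. Hence $\mathbb{E}[e^{i\psi}]=e^{i\mu'}A(\kappa)$.

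Third, equate the two moments: set $\mu'=\mu\bmod 2\pi$ and solve $A(\kappa)=\exp(-\sigma^2/2)$. This requires $A$ to be invertible on $[0,\infty)\to[0,1)$, which I would establish by showing $A$ is strictly increasing with $A(0)=0$ and $A(\kappa)\to1$. One route is to note that $A'(\kappa)=\mathrm{Var}_\kappa(\cos x)>0$, which follows from differentiating $\log I_0$ twice. The limits follow from $I_1(0)=0$ and from Laplace-type asymptotics $I_1/I_0\to1$. Since $\exp(-\sigma^2/2)\in(0,1]$ (the endpoint $1$ corresponds to the degenerate limit $\kappa=\infty$), the inverse $A^{-1}(\exp(-\sigma^2/2))$ is well defined, which yields the claimed parameters.

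The main obstacle is the word "approximated". The match is exact only in the first moment; higher moments differ, since for the vM distribution $\mathbb{E}[e^{i2\psi}]=I_2/I_0$, whereas for $\omega$ it is $\rho^4$. I would justify the approximation in two ways. First, cite~\cite{mardia2009directional} for the closeness of the two families. Second, note the limiting cases: as $\sigma\to0$ both distributions tend to the same normal approximation with variance $\sigma^2\approx1/\kappa$, and as $\sigma\to\infty$ both tend to the uniform distribution.
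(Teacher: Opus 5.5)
Your proposal is correct: matching the first trigonometric moment, $\mathbb{E}[e^{i\omega}]=e^{i\mu}e^{-\sigma^2/2}$ against $\mathbb{E}[e^{i\psi}]=e^{i\mu'}A(\kappa)$, with $A$ strictly increasing from $0$ to $1$, is precisely how the cited result of Mardia and Jupp defines the vM approximation to the wrapped normal. The paper gives no derivation beyond that citation, so your argument is a self-contained reconstruction of the same approach, and your observation that the two distributions agree exactly only at $p=1$ correctly identifies the sense in which the lemma is an approximation.
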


\begin{figure}[ht]
\centering
{\includegraphics[width=1\linewidth]{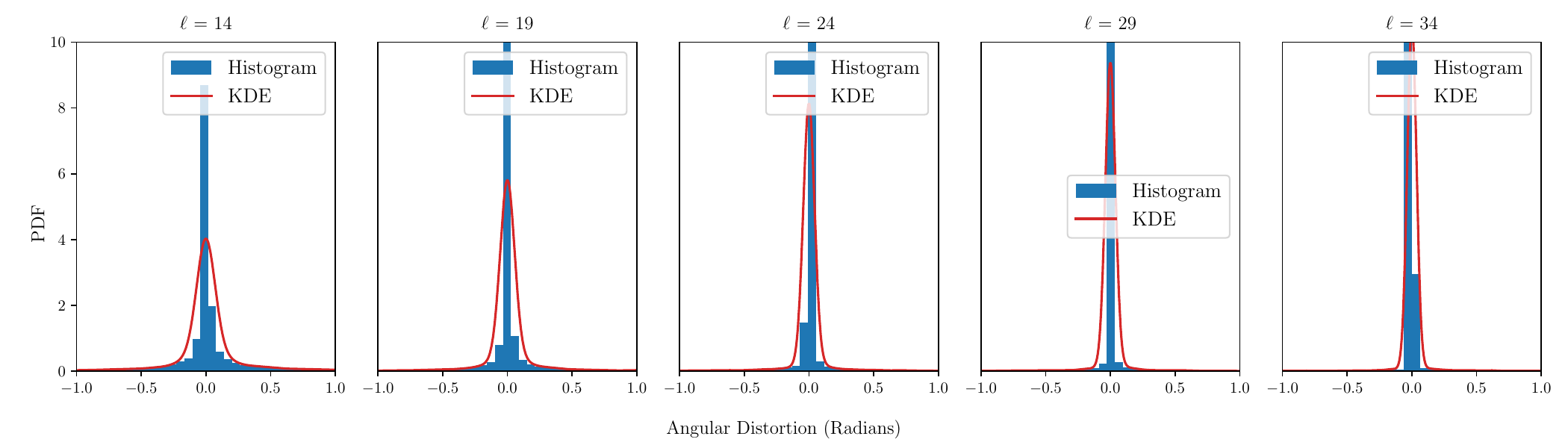}}
\caption{The distribution of distortion in angular space w.r.t. different traversal depths.}
\captionsetup{justification=justified}
\label{Distrib of distortion}
  \end{figure}

Based on Lemma~\ref{wrapped Gaussian}, we can approximate $\epsilon_\ell$ with a vM distribution $v\mathcal{M}(0,\rho_{\Delta,\ell})$,
where $\rho_{\Delta,\ell}=A^{-1}\Big(\exp\big(-\sigma_\Delta^2 a_\ell/2\big)\Big)$. We validate this approximation in Fig.~\ref{Distrib of distortion} by plotting the distribution of $\epsilon_\ell$. According to \eqref{first-order Taylor}, where $\bar\theta_\ell \sim v\mathcal{M}(\mu_j, \bar\kappa_\ell)$ and $\epsilon_\ell \sim v\mathcal{M}(0, \rho_{\Delta,\ell})$, we apply the convolution approximation from \cite{mardia2009directional}, which yields $\tilde{\theta}_\ell \sim v\mathcal{M}\Big(\mu_j, A^{-1}\big(A(\bar\kappa_\ell)A(\rho_{\Delta,\ell})\big)\Big)$, thereby completing the proof.

\subsubsection{Approximation of Expected Gradient Norm $a_\ell$}

Note that the parameter $a_\ell$ depends on the complex internal weights of $f_\ell(\cdot)$, making it intractable to model explicitly. According to numerous experimental results, we observe that $a_\ell$ can be approximated by an exponential function as \eqref{al} (see Fig.~\ref{al vs l} for validation). 

\begin{figure}[ht]
\centering
{\includegraphics[width=0.6\linewidth]{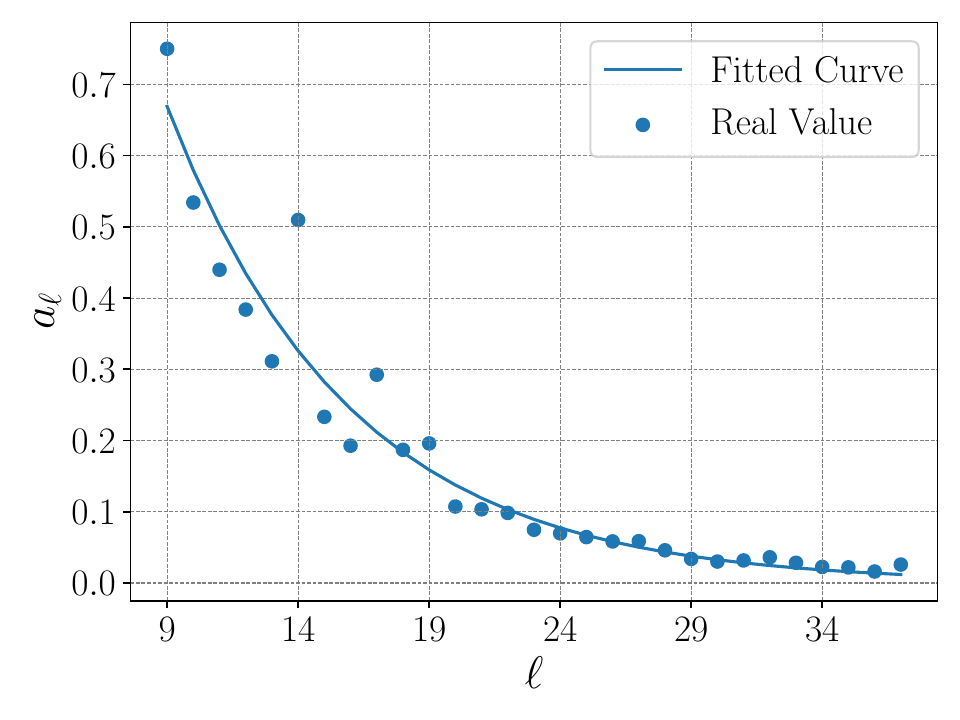}}
\caption{$a_\ell$ and the fitted curve.}
\vspace{-5mm}
\captionsetup{justification=justified}
\label{al vs l}
  \end{figure}

\subsection{Proof of Corollary \ref{monotonicity accuracy}}
\label{proof monotonicity accuracy}

We prove the two claims in Corollary \ref{monotonicity accuracy} as follows.

\subsubsection{Monotonicity of \eqref{accuracy} w.r.t. $J$ and $\kappa_{\Delta,\ell}$}

The first part is trivial because the integrand is positive, and the increasing $J$ shrinks the integration interval, which naturally reduces the integral.
To prove the second part, we begin by analyzing the integrand defined as $f(x,\kappa) \triangleq \frac{\exp(\kappa \cos x)}{\pi I_0(\kappa)}$. For $\forall \ 0<\kappa_1<\kappa_2$, we define 
$g(x)\stackrel{\triangle}= \ \ln f(x,\kappa_2)-\ln f(x,\kappa_1)=\ (\kappa_2-\kappa_1)\cos(x)-\big(\ln I_0(\kappa_2)-\ln I_0(\kappa_1)\big).$
Observe that $\big(\ln I_0(x)\big)'=\frac{I_1(x)}{I_0(x)}=A(x)$, then we have $\ln I_0(\kappa_2)-\ln I_0(\kappa_1)=A(\xi)(\kappa_2-\kappa_1)$ for some $\xi\in (\kappa_1, \kappa_2)$ by Lagrange's Mean Value Theorem. Therefore
$g(0)=\big(1-A(\xi)\big)(\kappa_2-\kappa_1)>0$ since $A(x)<1$ for $\forall x>0$. And $g(\pi)=-(\kappa_2-\kappa_1)-\big(\ln I_0(\kappa_2)-\ln I_0(\kappa_1)\big)<0$ because $\ln I_0(x)$ is a monotonically increasing function. Additionally, $g(x)$ decreases monotonically in $(0,\pi)$. Hence, there exists a unique $x_0\in(0,\pi)$ such that $g(x_0)=0$. Therefore, when $0<x<x_0$, $g(x)>0$  and thus $f(x,\kappa_2)>f(x,\kappa_1)$, and when $x_0<x<\pi$, $g(x)<0$ and thus $f(x,\kappa_2)<f(x,\kappa_1)$.

If $x_0\geq\frac{\pi}{J}$, we have $\int_{0}^{\frac{\pi}{J}}f(x,\kappa_2)\, dx-\int_{0}^{\frac{\pi}{J}}f(x,\kappa_1)\, dx=\int_{0}^{\frac{\pi}{J}}\big(f(x,\kappa_2)-f(x,\kappa_1)\big)\, dx>0$.
Otherwise, we have 
\begin{equation}
    \begin{split}
        &\int_{0}^{\frac{\pi}{J}}f(x,\kappa_2)\, dx-\int_{0}^{\frac{\pi}{J}}f(x,\kappa_1)\, dx\\
        =&\big(1-\int_{\frac{\pi}{J}}^{\pi}f(x,\kappa_2)\, dx\big)-\big(1-\int_{\frac{\pi}{J}}^{\pi}f(x,\kappa_1)\, dx\big)\\
        =&\int_{\frac{\pi}{J}}^{\pi}\big(f(x,\kappa_1)-f(x,\kappa_2)\big)\, dx>0.
    \end{split}
\end{equation}
Since for both situations we have $\int_{0}^{\frac{\pi}{J}}f(x,\kappa_2)\, dx-\int_{0}^{\frac{\pi}{J}}f(x,\kappa_1)\, dx>0$, we conclude that $\int_{0}^{\frac{\pi}{J}}f(x,\kappa_2)\, dx>\int_{0}^{\frac{\pi}{J}}f(x,\kappa_1)\, dx$ for $\forall \ 0<\kappa_1<\kappa_2$. This completes the proof. 

\subsubsection{Monotonicity of \eqref{accuracy} w.r.t. $\sigma_\Delta^2$ and $\ell$}

Both $A(c_1 \ell + c_2)$ and $\exp\big(-\sigma_\Delta^2 c_3\exp(-c_4\ell)/2\big)$ are positive and increase monotonically w.r.t. $\ell$. Hence, their product, denoted as $D(\sigma_\Delta^2,\ell)$, increases monotonically w.r.t. $\ell$. Additionally, $D(\sigma_\Delta^2,\ell)$ decreases monotonically w.r.t. $\sigma_\Delta^2$. Therefore, $\kappa_{\Delta,\ell}=A^{-1}\big(D(\sigma_\Delta^2,\ell)\big)$ decreases monotonically w.r.t. $\sigma_\Delta^2$ and increases monotonically w.r.t. $\ell$. Combining the result about the monotonicity of \eqref{accuracy} w.r.t. $\kappa_{\Delta,\ell}$, the proof is completed.

\subsection{Proof of Corollary \ref{range of accuracy}}
\label{proof of range of accuracy}

The first part is straightforward: as $\sigma_\Delta^2 \to \infty$, we have $\kappa \to 0$. Given the continuity of $P(\sigma_\Delta^2,\ell)$ with respect to $\kappa_{\Delta,\ell}$, substituting $\kappa_{\Delta,\ell}=0$ into \eqref{accuracy} completes the proof.
For the second part, we observe that as $\ell \to \infty$, $\kappa_{\Delta,\ell} \to \infty$. We proceed by establishing the following conclusion:
\begin{equation}
I_0(\kappa;a)\approx
\frac{e^\kappa}{\sqrt{2\pi \kappa}}
\mathrm{erf}\bigg(a\sqrt{\frac{\kappa}{2}}\bigg),\label{appriximation of I k a}
\end{equation}
where $I_0(\kappa;a)\stackrel{\triangle}=\frac{1}{\pi}\int_{0}^{{a}}\exp{(\kappa \cos{(x)})}\,dx$ and $\mathrm{erf}(x)\stackrel{\triangle}=\frac{2}{\sqrt{\pi}}\int_{0}^{{x}}\exp{(-\theta^2)}\,d\theta$.
Considering Taylor's series for approximation, we have $\cos(\theta)\approx1-\theta^2/2$, and therefore
\begin{equation}
\begin{split}
I_0(\kappa;a)&\approx\frac{e^{\kappa}}{\pi} \int_{0}^{{a}}\exp{(-\kappa\theta^2/2)}\,d\theta\\ &\overset{(a)}=\frac{e^{\kappa}}{\pi}\sqrt{\frac{2}{\kappa}}\int_{0}^{{a\sqrt{\frac{\kappa}{2}}}}\exp{(-\mu^2)}\,d\mu\\
    &=\frac{e^\kappa}{\sqrt{2\pi \kappa}}
\mathrm{erf}\bigg(a\sqrt{\frac{\kappa}{2}}\bigg),
\end{split}
\end{equation}
where (a) holds by replacing variable $\theta$ with $\mu\sqrt{\frac{2}{\kappa}}$. Thereby, \eqref{appriximation of I k a} is proved.
Hence,
\begin{equation}
    \begin{split}
\lim_{\kappa_{\Delta,\ell}\to\infty}P(\sigma_\Delta^2,\ell)
&= \lim_{\kappa_{\Delta,\ell}\to\infty}\frac{I_0(\kappa_{\Delta,\ell};\frac{\pi}{J})}{I_0(\kappa_{\Delta,\ell};\pi)}\\
&=\lim_{\kappa_{\Delta,\ell}\to\infty}\frac{\mathrm{erf}\bigg(\frac{\pi}{J}\sqrt{\frac{\kappa_{\Delta,\ell}}{2}}\bigg)}{\mathrm{erf}\bigg(\pi\sqrt{\frac{\kappa_{\Delta,\ell}}{2}}\bigg)}=1,
    \end{split}
\end{equation}
which completes the proof.

\subsection{Proof of Proposition \ref{asymptotic behavior}}\label{proof asymptotic behavior}

We first consider the asymptotic analysis of $1-P(\sigma_\Delta^2,\ell)$ w.r.t. $\kappa_{\Delta,\ell}$. 
\begin{equation}
    \begin{split}
        1-\frac{I_0(\kappa_{\Delta,\ell};\frac{\pi}{J})}{I_0(\kappa_{\Delta,\ell};\pi)}&\sim 1-\frac{\mathrm{erf}\bigg(\frac{\pi}{J}\sqrt{\frac{\kappa_{\Delta,\ell}}{2}}\bigg)}{\mathrm{erf}\bigg(\pi\sqrt{\frac{\kappa_{\Delta,\ell}}{2}}\bigg)}\\
        &\overset{(a)}{\sim}1-\mathrm{erf}\bigg(\frac{\pi}{J}\sqrt{\frac{\kappa_{\Delta,\ell}}{2}}\bigg)\\
        &\overset{(b)}{\sim}\frac{\sqrt{2}\, J}{\pi^{3/2} \sqrt{\kappa_{\Delta,\ell}}}\,
\exp\left( -\frac{\pi^2}{2 J^2} \kappa_{\Delta,\ell} \right),\label{analysis of 1-P}
    \end{split}
\end{equation}
where (a) holds because the $\mathrm{erf}\big(\pi\sqrt{\frac{\kappa}{2}}\big)$ saturates quickly and approaches $1$ when $\kappa$ is large. (b) holds because of the asymptotic expansion of $\mathrm{erf}(\cdot)$ as $\mathrm{erf}(z) \sim 1 - \frac{e^{-z^2}}{z\sqrt{\pi}}$\cite{abramowitz1964handbook}.
Note that when $\sigma^2_\Delta=0$, we have $\kappa_{\Delta,\ell}=c_1 \ell + c_2$. Substituting it into \eqref{analysis of 1-P} yields the stated results.

\bibliographystyle{ieeetr}
\bibliography{Ref}

\end{document}